\documentclass[10pt,journal,twocolumn]{IEEEtran}

\usepackage{amsfonts,amsbsy,bm,amsmath}
\usepackage[nolist]{acronym}
\usepackage{mathrsfs} 
\usepackage{subcaption}
\captionsetup{font=footnotesize,singlelinecheck=false}
\captionsetup[sub]{font=footnotesize,singlelinecheck=true}
\captionsetup[table]{textfont={sc,footnotesize}, labelfont=footnotesize, labelsep=newline,justification=centering}

\usepackage{multicol}

\usepackage{graphicx,cite,amssymb,mathtools,amsthm}
\bibliographystyle{IEEEtran}
\usepackage{stfloats}
\usepackage{xcolor}
\usepackage{bbm}
\usepackage{pgfplots}
\usepackage{tikz}
\usetikzlibrary{arrows,shapes.misc,chains,scopes}
\usetikzlibrary{spy}
\pgfplotsset{compat=1.15}
\usetikzlibrary{decorations.pathreplacing}
\usepackage{siunitx}
\usepackage{csquotes}
\usepackage{makecell}
\usepackage[ruled,linesnumbered,vlined]{algorithm2e}
\usepackage{stmaryrd}

\usetikzlibrary{matrix}
\usepgfplotslibrary{groupplots,units}
\pgfplotsset{compat=newest}

\SetNlSty{bfseries}{\color{black}}{}
\SetArgSty{textnormal}

\DeclareMathOperator*{\argmin}{argmin}

\newcommand{\nonl}{\renewcommand{\nl}{\let\nl\oldnl}}

\usepackage{nicefrac}
\setlength{\textfloatsep}{5pt plus 1.0pt minus 2.0pt}
\setlength{\floatsep}{0pt plus 1.0pt minus 2.0pt}

\newcommand{\mcc}{\textcolor{black}}
\newcommand{\ed}{\textcolor{black}}

\newcommand{\edd}{\textcolor{black}}
\newcommand{\eddd}{\textcolor{black}}
\newcommand{\edddd}{\textcolor{black}}
\newcommand{\eddddd}{\textcolor{black}}

\begin{document}
	\newcommand{\ensemble}{\mathscr{C}}
\newcommand{\code}{\mathcal{C}}
\newcommand{\vecu}{\boldsymbol{u}}
\newcommand{\veci}{\boldsymbol{i}}
\newcommand{\vecf}{\boldsymbol{f}}
\newcommand{\vecv}{\boldsymbol{v}}
\newcommand{\vecp}{\boldsymbol{p}}
\newcommand{\vecx}{\boldsymbol{x}}
\newcommand{\vecone}{\boldsymbol{1}}
\newcommand{\vecuhat}{\hat{\boldsymbol{u}}}
\newcommand{\vecc}{\boldsymbol{c}}
\newcommand{\vecb}{\boldsymbol{b}}
\newcommand{\vecchat}{\hat{\boldsymbol{c}}}
\newcommand{\vecy}{\boldsymbol{y}}
\newcommand{\vecz}{\boldsymbol{z}}
\newcommand{\B}{\boldsymbol{B}}
\newcommand{\G}{\boldsymbol{G}}
\newcommand{\GK}{\boldsymbol{K}}
\newcommand{\Gsys}{\G_{\mathsf{i},\mathsf{sys}}}
\newcommand{\Gnsys}{\G_{\mathsf{i},\mathsf{nsys}}}
\newcommand{\Per}{\boldsymbol{\Pi}}
\newcommand{\I}{\boldsymbol{I}}
\newcommand{\perP}{\boldsymbol{P}}
\newcommand{\perS}{\boldsymbol{S}}
\newcommand{\GH}[1]{\bm{\mathsf{G}}_{#1}}
\newcommand{\T}[1]{\bm{\mathsf{T}}{(#1)}}

\newcommand{\mi}{\mathrm{I}}
\newcommand{\Prob}{P}
\newcommand{\Z}{\mathrm{Z}}
\newcommand{\SPC}{\mathcal{S}}
\newcommand{\Rep}{\mathcal{R}}
\newcommand{\f}{\mathrm{f}}
\newcommand{\SC}{\mathrm{SC}}
\newcommand{\Q}{\mathrm{Q}}
\newcommand{\MAP}{\mathrm{MAP}}
\newcommand{\E}{\mathrm{E}}
\newcommand{\U}{\mathrm{U}}
\newcommand{\Ham}{\mathrm{H}}

\newcommand{\Amin}{\mathrm{A}_{\mathrm{min}}}
\newcommand{\dmin}{d}
\newcommand{\erfc}{\mathrm{erfc}}

\newcommand{\de}{\mathrm{d}}

\newcommand{\decoRule}{\rule{\textwidth}{.4pt}}

\newcommand{\oleq}[1]{\overset{\text{(#1)}}{\leq}}
\newcommand{\oeq}[1]{\overset{\text{(#1)}}{=}}
\newcommand{\ogeq}[1]{\overset{\text{(#1)}}{\geq}}
\newcommand{\ogeql}[2]{\overset{#1}{\underset{#2}{\gtreqless}}}
\newcommand\numeq[1]%
{\stackrel{\scriptscriptstyle(\mkern-1.5mu#1\mkern-1.5mu)}{=}}

\newcommand{\pscd}{\gl{\Prob(\mathcal{E}_{\SCD})}}
\newcommand{\uED}{\gl{\hat{u}}}
\newcommand{\LED}{\gl{L_i^{(\ED)}}}
\newcommand{\inverson}[1]{\gl{\mathbb{I}\left\{#1\right\}}}
\theoremstyle{definition}
\newtheorem{mydef}{Definition}
\newtheorem{prop}{Proposition}
\newtheorem{theorem}{Theorem}

\newtheorem{lemma}{Lemma}
\newtheorem{remark}{Remark}
\newtheorem{example}{Example}
\newtheorem{definition}{Definition}
\newtheorem{corollary}{Corollary}


\definecolor{lightblue}{rgb}{0,.5,1}
\definecolor{normemph}{rgb}{0,.2,0.6}
\definecolor{supremph}{rgb}{0.6,.2,0.1}
\definecolor{lightpurple}{rgb}{.6,.4,1}
\definecolor{gold}{rgb}{.6,.5,0}
\definecolor{orange}{rgb}{1,0.4,0}
\definecolor{hotpink}{rgb}{1,0,0.5}
\definecolor{newcolor2}{rgb}{.5,.3,.5}
\definecolor{newcolor}{rgb}{0,.3,1}
\definecolor{newcolor3}{rgb}{1,0,.35}
\definecolor{darkgreen1}{rgb}{0, .35, 0}
\definecolor{darkgreen}{rgb}{0, .6, 0}
\definecolor{darkred}{rgb}{.75,0,0}
\definecolor{midgray}{rgb}{.8,0.8,0.8}
\definecolor{darkblue}{rgb}{0,.25,0.6}

\definecolor{lightred}{rgb}{1,0.9,0.9}
\definecolor{lightblue}{rgb}{0.9,0,0.0}
\definecolor{lightpurple}{rgb}{.6,.4,1}
\definecolor{gold}{rgb}{.6,.5,0}
\definecolor{orange}{rgb}{1,0.4,0}
\definecolor{hotpink}{rgb}{1,0,0.5}
\definecolor{darkgreen}{rgb}{0, .6, 0}
\definecolor{darkred}{rgb}{.75,0,0}
\definecolor{darkblue}{rgb}{0,0,0.6}

\definecolor{bgblue}{RGB}{245,243,253}
\definecolor{ttblue}{RGB}{91,194,224}

\definecolor{dark_red}{RGB}{150,0,0}
\definecolor{dark_green}{RGB}{0,150,0}
\definecolor{dark_blue}{RGB}{0,0,150}
\definecolor{dark_pink}{RGB}{80,120,90}

\begin{acronym}
\acro{5G}{the $5$-th generation wireless system}
        \acro{ANV}{average number of node-visits}
	\acro{APP}{a-posteriori probability}
	\acro{ARQ}{automated repeat request}
	\acro{ASCL}{adaptive successive cancellation list}
	\acro{ASK}{amplitude-shift keying}
	\acro{AUB}{approximated union bound}
	\acro{AWGN}{additive white Gaussian noise}
	\acro{B-DMC}{binary-input discrete memoryless channel}
	\acro{BEC}{binary erasure channel}
	\acro{BER}{bit error rate}
	\acro{biAWGN}{binary-input additive white Gaussian noise}
	\acro{bpcu}{bits per channel use}
	\acro{BPSK}{binary phase-shift keying}
	\acro{BRGC}{binary reflected Gray code}
	\acro{BSS}{binary symmetric source}
	\acro{CC}{chase combining}
	\acro{CN}{check node}
	\acro{CRC}{cyclic redundancy check}
	\acro{CSI}{channel state information}
	\acro{DE}{density evolution}
	\acro{DMC}{discrete memoryless channel}
	\acro{DMS}{discrete memoryless source}
        \acro{dRM}{dynamic RM}
	\acro{DSCF}{dynamic successive cancellation flip}
	\acro{eMBB}{enhanced mobile broadband}
	\acro{FER}{frame error rate}
	\acro{uFER}{undetected frame error rate}
	\acro{FHT}{fast Hadamard transform}
	\acro{GA}{Gaussian approximation}
	\acro{GF}{Galois field}
	\acro{HARQ}{hybrid automated repeat request}
	\acro{i.i.d.}{independent and identically distributed}
	\acro{IF}{incremental freezing}
	\acro{IR}{incremental redundancy}
	\acro{LDPC}{low-density parity-check}
	\acro{LFPE}{length-flexible polar extension}
	\acro{LHS}{left hand side}
	\acro{LLR}{log-likelihood ratio}
	\acro{MAP}{maximum-a-posteriori}
	\acro{MC}{Monte Carlo}
	\acro{MLC}{multilevel coding}
	\acro{MLPC}{multilevel polar coding}
	\acro{MLPC}{multilevel polar coding}
	\acro{ML}{maximum-likelihood}
	\acro{MC}{metaconverse}
        \acro{ORBGRAND}{ordered reliability bits guessing random additive noise decoding}
	\acro{PAC}{polarization-adjusted convolutional}
	\acro{PAT}{pilot-assisted transmission}
	\acro{PCM}{polar-coded modulation}
	\acro{PDF}{probability density function}
	\acro{PE}{polar extension}
	\acro{PMF}{probability mass function}
	\acro{PM}{path metric}
	\acro{PW}{polarization weight}
	\acro{QAM}{quadrature amplitude modulation}
	\acro{QPSK}{quadrature phase-shift keying}
	\acro{QUP}{quasi-uniform puncturing}
	\acro{RCU}{random-coding union}
	\acro{RHS}{right hand side}
	\acro{RM}{Reed-Muller}
	\acro{RQUP}{reversal quasi-uniform puncturing}
	\acro{RV}{random variable}
	\acro{SC-Fano}{successive cancellation Fano}
	\acro{SCOS}{successive cancellation ordered search}
	\acro{SCF}{successive cancellation flip}
	\acro{SCL}{successive cancellation list}
	\acro{SCS}{successive cancellation stack}
	\acro{SC}{successive cancellation}
	\acro{SE}{spectral efficiency}
	\acro{SNR}{signal-to-noise ratio}
	\acro{SP}{set partitioning}
    \acro{TEP}{test error pattern}
	\acro{UB}{union bound}
	\acro{VN}{variable node}
\end{acronym}
	\title{Successive Cancellation Ordered Search\\ Decoding of Modified $\G_N$-Coset Codes}

	\author{\IEEEauthorblockN{Peihong Yuan, \IEEEmembership{Member, IEEE}, and Mustafa Cemil Co\c{s}kun}, \IEEEmembership{Member, IEEE}
	    \thanks{This work was partially supported by the German Research Foundation (DFG) under Grant KR~3517/9-1. This paper was presented in part at the IEEE Information Theory Workshop (ITW), October 2021, Kanazawa, Japan~\cite{YC21} and some of the results are published in Chapter 4 of the Ph.D. thesis of the first author~\cite{yuan2021thesis}.}
	\thanks{Peihong Yuan is with the Research Laboratory of Electronics (RLE), Massachusetts Institute of Technology (MIT), Cambridge, MA, USA (email: phyuan@mit.edu). Mustafa Cemil Co\c{s}kun is with Radio Systems Research Lab of Nokia Bell Labs, Murray Hill, NJ, USA (mustafa.coskun@nokia-bell-labs.com). Parts of this work has been carried out when both authors were with the Institute for Communications Engineering (LNT), Technical University of Munich (TUM), Munich, Germany.}
}

	\maketitle

\begin{abstract}
A tree search algorithm called successive cancellation ordered search (SCOS) is proposed for $\G_N$-coset codes that implements maximum-likelihood (ML) decoding with adaptive complexity for transmission over binary-input AWGN channels. Unlike bit-flip decoders, no outer code is needed to terminate decoding; therefore, SCOS also applies to $\G_N$-coset codes modified with dynamic frozen bits. The average complexity is close to that of successive cancellation (SC) decoding at practical frame error rates (FERs) for codes with wide ranges of rate and lengths up to $512$ bits, which perform within $0.25$ dB or less from the random coding union bound and outperform Reed--Muller codes under ML decoding by up to $0.5$ dB. Simulations illustrate simultaneous gains for SCOS over SC-Fano, SC stack (SCS) and SC list (SCL) decoding in FER and the average complexity at various SNR regimes. SCOS is further extended by forcing it to look for candidates satisfying a threshold, thereby outperforming basic SCOS under complexity constraints. The modified SCOS enables strong error-detection capability without the need for an outer code. In particular, the $(128, 64)$ polarization-adjusted convolutional code under modified SCOS provides gains in overall and undetected FER compared to CRC-aided polar codes under SCL/dynamic SC flip decoding at high SNR.
\end{abstract}
	\begin{keywords}
            Complexity-adaptive maximum-likelihood decoding, error detection, polar codes, Reed--Muller codes, dynamic frozen bits.
	\end{keywords}
 \markboth
	{to appear in IEEE Transactions on Communications}
	{}

\section{Introduction}\label{sec:intro}

$\G_N$-coset codes are a class of block codes \cite{arikan2009channel} that include polar codes\cite{stolte2002rekursive,arikan2009channel} and \ac{RM} codes\cite{reed,muller}. Polar codes achieve capacity over \acp{B-DMC} under low-complexity \ac{SC} decoding\cite{arikan2009channel} and \ac{RM} codes achieve capacity \mcc{over} \acp{BEC} under \ac{ML} decoding\cite{kudekar17}.\mcc{\footnote{\mcc{\ac{RM} codes achieve capacity over \acp{B-DMC} under bit-wise \ac{ML} decoding~\cite{RP21}, i.e., the average bit error probability vanishes asymptotically in the block length.}}}
However, their performance under \ac{SC} decoding~\cite{stolte2002rekursive} is not competitive for short- to moderate-lengths, e.g., from $\edd{64}$ to \ed{$512$} bits\cite{Coskun18:Survey}.\footnote{Long \ac{RM} codes are not well-suited for \ac{SC} decoding\cite{IU21,CP21}: the error probability of long \ac{RM} codes under \ac{SC} decoding is lower-bounded by $\nicefrac{1}{2}$\cite[Section X]{arikan2009channel}.} Significant research effort hast been put into approaching \ac{ML} performance by \edd{improved decoding algorithms with an} SC decoding schedule\edd{\cite{Dumer:List06,NC12:seq,miloslavskaya2014sequential,trifonov2018score,jeong2019sc,afisiadis2014low,chandesris2018dynamic}, improving the distance properties~\cite{RMpolar14,Mondelli14,trifonov16,KKK19,LGZ21,YFV21,RBV20,MV20} or both\cite{tal2015list,arikan2019sequential}}.

The idea of \emph{dynamic} frozen bits lets one represent any linear block code as a \emph{modified} $\G_N$-coset code~\cite{trifonov16}. This concept unifies the concatenated polar code approach, e.g., with a high-rate outer \ac{CRC} code, to improve the distance spectrum of polar codes so that they can be decoded with low to moderate complexity~\cite{tal2015list}.

This paper proposes \emph{\ac{SCOS}} as \edd{an} \ac{ML} decoder for modified $\G_N$-coset codes. The decoding complexity adapts to the channel noise and an extension of \ac{SCOS} limits the worse-case complexity while still permitting near-\ac{ML} decoding \ed{for various code lengths $N\in\ed{\{64, 128, 256, 512\}}$ and \edd{wide ranges of rate from low to high}}. The decoder can be used for $\G_N$-coset codes\edd{,} \ac{CRC}-concatenated $\G_N$-coset codes \edd{as well as those with dynamic frozen bits}. Numerical results show that \ac{RM} and RM-polar codes with dynamic frozen bits of block length $N\in\ed{\{64, 128, 256, 512\}}$, \edd{i.e., \ac{dRM}~\cite{CP21} and} \ed{the proposed} dRM-polar codes, perform within $0.25$ dB of the \ac{RCU} bound\cite{Polyanskiy10:BOUNDS} with an average complexity close to that of \ac{SC} decoding at \ed{low enough \acp{FER}}. \edd{Remarkably, \ac{dRM} codes under SCOS outperform \ac{ML} performance of \ac{RM} codes up to $0.5$ dB}. \ed{To illustrate the benefits \edd{of the proposed algorithm}, simulations with \edd{\ac{SC-Fano}, \ac{SCS}} and \ac{SCL} decoding \edd{algorithms} are also provided \edd{for codes of length $128$ as examples}. Compared to \edd{all}, SCOS provides lower average decoding complexity with better \ac{FER} performance \edd{at various operating regimes}. \ac{SCOS} is further extended by limiting attention to codeword candidates satisfying an optimized threshold test, which} improves the performance when a maximum complexity constraint is imposed. In addition, the threshold test lets the decoder avoid making a decision\cite{forney1968exponential} which provides simultaneous gains in overall \ac{FER} and \ac{uFER} for the $(128,64)$ \ac{PAC} code as compared to a \ac{CRC}-concatenated polar code under \ac{SCL} \ed{and \ac{DSCF}} decoding \ed{algorithms}, where the \ac{CRC} is optimized for the lower tail of the distance spectrum\cite{Yuan19}.

\ac{SCOS} borrows ideas from \ac{SC}-based flip\cite{afisiadis2014low,chandesris2018dynamic}, sequential\cite{fano1963heuristic,NC12,miloslavskaya2014sequential,trifonov2018score,jeong2019sc} and list decoders\cite{D74,fossorier,wu2007soft,tal2015list}. It is a tree search algorithm that flips the bits of valid paths to find a leaf with higher likelihood than other leaves, if such a leaf exists, and repeats until the \ac{ML} decision is found. \edddd{The search stores the branches} \eddddd{in the list} that is updated progressively while running partial \ac{SC} decoding by flipping the bits of the most-likely leaf at each iteration. The order of the candidates is chosen according to the probability that they provide the \ac{ML} decision. \ac{SCOS} does not require an outer code (as for flip-decoders) or parameter optimization for the performance vs. complexity trade-off (as for sequential decoders).

This paper is organized as follows. Section \ref{sec:prelim} gives background on the problem. Section \ref{sec:sc_os} presents the \ac{SCOS} algorithm \mcc{with the pseudo codes}. \edd{The complexity of \ac{SCOS} is discussed together with the numerical results} in Section \ref{sec:comp}. \edd{Then, Section \ref{sec:improvements} proposes modifications and provide\edddd{s} numerical results in comparison to the original algorithm.} Section \ref{sec:numerical} \edd{compares \ac{SCOS} to the other existing complexity-adaptive decoders} and Section \ref{sec:conclusions} concludes the paper.
\section{Preliminaries}
\label{sec:prelim}
Let $x^a$ be the vector $(x_1, x_2, \dots, x_a)$; if $a=0$, then the vector is empty. Given $x^N$ and a set $\mathcal{A}\subset [N]\triangleq\{1,\dots,N\}$, let $x_{\mathcal{A}}$ be the subvector $(x_i:i\in\mathcal{A})$. For \eddd{set $\mathcal{A}$, we define} an intersection set as $\mathcal{A}^{(i)}\triangleq\mathcal{A}\cap[i]$, $i\in[N]$. Uppercase letters refer to \acp{RV} and lowercase letters to realizations. A \ac{B-DMC} is denoted as $ W : \mathcal{X} \rightarrow \mathcal{Y} $, with input alphabet $ \mathcal{X} = \{0,1\} $, output alphabet $ \mathcal{Y} $, and transition probabilities $ W(y|x) $ for $ x \in \mathcal{X} $ and $ y \in \mathcal{Y} $\edd{\cite[Sec. 4]{Richardson:2008:MCT:1795974}}. The transition probabilities of $N$ independent uses of the same channel are denoted as $ W^N(y^N|x^N)$ and can be factored as  $W^N(y^N|x^N)= \prod_{i=1}^{N} W(y_i|x_i)$. Capital bold letters refer to matrices, e.g., $\B_N$ denotes the $N\times N$ \emph{bit reversal} matrix \cite{arikan2009channel} and $\G_{2}$ denotes the $2\times 2$ Hadamard matrix.

\subsection{$\G_N$-coset Codes}\label{sec:polar_RM}
Consider the matrix $\G_{N} = \B_N\G_{2}^{\otimes n}$, where $N = 2^n$ with a non-negative integer $n$ and $\G_{2}^{\otimes n}$ is the $n$-fold Kronecker product of $\G_{2}$. For the set $\mathcal{A}\subseteq[N]$ with $|\mathcal{A}|=K$, let $U_{\mathcal{A}}$ have entries that are \ac{i.i.d.} uniform \emph{information} bits, and let $U_{\mathcal{A}^c}=u_{\mathcal{A}^c}$ be fixed or \emph{frozen}\ed{, where $\mathcal{A}^c\triangleq [N]\setminus\mathcal{A}$}. The mapping $c^N = u^N\G_{N}$ defines a $\G_N$-coset code\cite{arikan2009channel}. Polar and \ac{RM} codes are $\G_N$-coset codes with different selections of $\mathcal{A}$\cite{arikan2009channel,stolte2002rekursive}.

Using $\G_{N}$, the transition probability from $u^N$ to $y^N$ is $W_{N}(y^N|u^N) \triangleq W^{N}(y^N|u^N\G_{N})$. The transition probabilities of the $i$-th \emph{bit-channel}, a synthesized channel with the input $u_i$ and the output $(y^N,u^{i-1})$, are defined by
\begin{equation}
	W_{N}^{(i)}(y^N,u^{i-1}|u_i) \triangleq \sum_{u_{i+1}^N\in\mathcal{X}^{N-i}}\frac{1}{2^{N-1}}W_{N}(y^N|u^N).
\end{equation}
An $(N,K)$ polar code is designed by placing the $K$ most reliable bit-channels with indices $i\in[N]$ into the set $\mathcal{A}$ that can, e.g., be found using density evolution \cite{arikan2009channel,Mori:2009SIT}. An $r$-th order \ac{RM} code of length-$N$ and dimension $K = \sum_{i=0}^{r}\binom{n}{i}$, where $0\leq r\leq n$, is denoted as RM$(r,n)$. Its set $\mathcal{A}$ consists of the indices, $i\in[N]$, with Hamming weight at least $n-r$ for the binary expansion of $i-1$.  For both codes, one sets $u_i=0$ for $i\in\mathcal{A}^c$.

We make use of \emph{dynamic} frozen bits\cite{trifonov16}. A frozen bit is dynamic if its value depends on a subset of information bits preceding it; the resulting codes are called modified $\G_N$-coset codes. Dynamic frozen bits tend to improve the performance of near-\ac{ML} decoders\cite{Yuan19,arikan2019sequential,CNP20,CP21,LGZ21} because the weight spectrum of the resulting code tends to improve as compared to the underlying code\cite{Yuan19,LZG19,YFV21,RBV20,LZL21,LGZ21}. For the numerical results, we will consider short- to moderate-length \ac{RM} codes with dynamic frozen bits, called dRM codes\cite[Def. 1]{CP21}. An important instance from the ensemble is the PAC codes with \ac{RM} rate-profiling\cite{arikan2019sequential}. However, the average complexity gets large for (near-)\ac{ML} decoding of dRM codes as they get longer. Therefore, \edd{the formal definition of the modified RM-polar codes is given below}.
\begin{definition}\label{def:dRM}
The $(N,K)$ dRM-polar ensemble is the set of codes specified by the set $\mathcal{A}$ of an $(N,K)$ RM-polar code and choosing
\begin{equation}
    u_i = \underset{j\in\mathcal{A}^{(i-1)}}{\bigoplus} v_{j,i}u_j, \quad\forall i\in\mathcal{A}^c\label{eq:dfrozen}
    \end{equation}
with all possible $v_{j,i}\in\{0,1\}$ and $\mathcal{A}^{(0)}\triangleq\varnothing$, where $\edd{\bigoplus}$ denotes XOR summation \edd{and $u_i\triangleq 0$ if $\mathcal{A}^{(i-1)}=\varnothing$ for any $i\in\mathcal{A}^c$}.
\end{definition}

\subsection{Related Decoding Algorithms}\label{sec:decoding}

\subsubsection{Successive Cancellation Decoding}

Let $c^{N}$ and $y^N$ be the transmitted and received words, respectively. \ac{SC} decod\edd{ing makes the} decision for the $i$-th bit-channel sequentially from $i=1$ to $i=N$  as follows. For $i\in\ed{\mathcal{A}^c}$\edd{,} set $\hat{u}_{i}$ to its (dynamic) frozen value. For $i\in\mathcal{A}$\edd{,} compute the soft message $\ell_i\left(\hat{u}_1^{i-1}\right)$ defined as
\begin{equation}\label{eq:soft_message}
    \ell_i\left(\hat{u}_1^{i-1}\right)\triangleq\log\frac{P_{U_i|Y^NU^{i-1}}(0|y^N,\hat{u}^{i-1})}{P_{U_i|Y^NU^{i-1}}(1|y^N,\hat{u}^{i-1})}
\end{equation}
assuming that the previous decisions $\hat{u}_{1}^{i-1}$ are correct and the frozen bits after $u_i$ are uniformly distributed. \edd{Then, it makes a hard decision as}
\begin{equation}\label{eq:dec_fnc}
	\hat{u}_{i}=\begin{cases}
	0 & \text{if }\ell_i\left(\hat{u}_1^{i-1}\right)\geq 0\\
	1 & \text{otherwise}.
	\end{cases}
\end{equation}
Any erroneous decision $\hat{u}_i\neq u_i$, $i\in\mathcal{A}$, cannot be corrected by \ac{SC} decoding and results in a frame error. In the following, we review techniques to overcome this problem.

\subsubsection{Successive Cancellation List Decoding}\label{sec:list_decoding}
\ac{SCL} decoding tracks several \ac{SC} decoding paths~\cite{tal2015list} in parallel. At each decoding phase $i\in\mathcal{A}$, instead of making a hard decision on $u_i$, two possible decoding paths are continued in parallel threads. The maximum number $2^{\edd{K}}$ of paths implements \ac{ML} decoding but with exponential complexity in $\edd{K}$. To limit complexity, one may keep up to $L$ paths at each phase. The reliability of decoding path $v^{i}$ is quantified by a \emph{\ac{PM}} defined as~\cite{balatsoukas2015llr}
\begin{align}
    M\left(v^{i}\right)&\triangleq -\log P_{U^i|Y^N}\left(v^i|y^N\right)\label{eq:score_scl}\\
    &=M\left(v^{i-1}\right)+\log\left(1+e^{-\left(1-2v_i\right)\ell_i\left(v_1^{i-1}\right)}\right)\\
    &\approx
\left\{\begin{aligned}
&M\left(v^{i-1}\right),~\text{if}~\text{sign}\left(\ell_i\left(v^{i-1}\right)\right)=1-2v_i\\
&M\left(v^{i-1}\right)+\left|\ell_i\left(v^{i-1}\right)\right|, ~\text{otherwise}
\end{aligned}\right.
    \label{eq:PM}
\end{align}
where \eqref{eq:PM} can be computed recursively using \ac{SC} decoding with $M\left(v^0\right)\triangleq 0$.
At the end of $N$-th decoding phase, a list $\mathcal{L}$ of paths is collected. Finally, the output is the bit vector minimizing the \ac{PM}:
\begin{equation}
    \hat{u}^N = \argmin_{v^{N}\in\mathcal{L}} M\left(v^{N}\right).
\end{equation}

\subsubsection{Flip Decoding}\label{sec:flip_decoding}
\ac{SCF} decoding \cite{afisiadis2014low} aims to correct the first erroneous bit decision by sequentially flipping the unreliable decisions. This procedure requires an error-detecting outer code, e.g., a \ac{CRC} code.

The \ac{SCF} decoder starts by performing \ac{SC} decoding for the inner code to generate the first estimate $v^N$. If $v^N$ passes the \ac{CRC} test, it is declared as the output $\hat{u}^N=v^N$. If not, then the \ac{SCF} algorithm attempts to correct the bit errors at most $T_\text{max}$ times. At the $t$-th attempt, $t\in[T_\text{max}]$, the decoder finds the index $i_t$ of the $t$-th least reliable decision in $v^N$ according to the amplitudes of the soft messages \eqref{eq:soft_message}. The \ac{SCF} algorithm restarts the \ac{SC} decoder by flipping the estimate $v_{i_t}$ to $v_{i_t}\oplus 1$. The CRC is checked after each attempt. This decoding process continues until the \ac{CRC} passes or $T_\text{max}$ is reached.

Introducing a bias term to account for the reliability of the previous decisions enhances the performance~\cite{chandesris2018dynamic}. The improved metric is calculated as
\begin{align}
    Q(i) = \left|\ell_i\left(v^{i-1}\right)\right| + \sum_{\substack{j\in\mathcal{A}^{(i)}}}\frac{1}{\alpha}\log\left(1+e^{-\alpha \left|\ell_j\left(v^{j-1}\right)\right|}\right)
\end{align}
where $\alpha>0$ is a scaling factor.

\ac{SCF} decoding can be generalized to flip multiple bit estimates at once, leading to \ac{DSCF} decoding~\cite{chandesris2018dynamic}. The reliability of the initial estimates $\tilde{u}_\mathcal{E}$, $\mathcal{E}\subseteq\mathcal{A}$, is described by
\begin{align}\label{eq:dscf}
    \!\!\!\!Q(\mathcal{E}) \!=\! \sum_{i\in\mathcal{E}}\left|\ell_i\left(v^{i-1}\right)\right| \!+\!\!\!\!\!\!\sum_{\substack{j\in\mathcal{A}^{(i_\text{max})}}}\!\frac{1}{\alpha}\log\!\left(1+e^{-\alpha \left|\ell_j\left(v^{j-1}\right)\right|}\right)
\end{align}
where $i_{\text{max}}$ is the largest element in $\mathcal{E}$. The set of flipping positions is chosen as the one minimizing the metric \eqref{eq:dscf} and is constructed progressively.

\subsubsection{Sequential Decoding}\label{sec:seq_RM}
We review two sequential decoding algorithms, namely \ac{SCS} decoding~\cite{NC12:seq,miloslavskaya2014sequential,trifonov2018score} and \ac{SC-Fano} decoding~\cite{jeong2019sc,arikan2019sequential}.

\ac{SCS} decoding stores the $\edd{D}$ most reliable paths (possibly) with different lengths and discards the rest whenever the stack is full. At each iteration, the decoder selects the most reliable path and creates two possible decoding paths based on this path. The winning word is declared once a path length becomes $N$. \edd{To limit its worst-case complexity similar to that of \ac{SCL} decoding with list size $L$, the decoding is limited to have at most $L$ visits each node in the decoding tree, which is finished in at most $LN$ node-visits.}
\ac{SC-Fano} decoding deploys a Fano search \cite{fano1963heuristic} that allows backward movement in the decoding tree and that uses a dynamic threshold. The dynamic threshold is initialized as $T=0$. During the Fano search, if one cannot find a path with score less than $T$ then the dynamic threshold is updated to $T+\Delta$, where $\Delta$ is called the threshold spacing and controls the performance vs. complexity tradeoff.

Sequential decoding compares paths of different lengths. The probabilities $P_{U^i|Y^N}\left(v^i|y^N\right)$, $v^i\in\{0,1\}^i$, however, cannot capture the effect of the path's length. A new score is introduced in\ed{~\cite{TM13} and used in~\cite{jeong2019sc}} to account for the expected error rate of the future bits as
\begin{align}
    S\left(v^i\right)&\triangleq -\log\frac{P_{U^i|Y^N}\left(v^i|y^N\right)}{\prod_{j=1}^{i} \left(1-p_{j}\right)}\label{eq:Score}\\
    &=M\left(v^{i}\right) + \sum_{j=1}^{i}\log\left(1-p_{j}\right)\label{eq:Score2}
\end{align}
where $p_{j}$ is the probability of the event that the first bit error occurred for $u_j$ in \ac{SC} decoding. The probabilities $p_{i}$ can be computed via Monte Carlo simulation\edddd{s}\cite{stolte2002rekursive,arikan2009channel} or they can be approximated via density evolution\cite{Mori:2009SIT} offline. In the following, one may generalize the score as
\begin{equation}
    S\left(v^i\right) = M\left(v^{i}\right) + b_i \label{eq:score_bias_term}
\end{equation}
where $b_i$ is called a bias term. We discuss in Section~\ref{sec:bias} how the bias term affects the proposed decoding algorithm.

\section{SC Ordered Search Decoding}
\label{sec:sc_os}

\ac{SCOS} uses the metrics (\ref{eq:PM}) and (\ref{eq:Score}). We first define
\begin{align}
    \overline{M}\left(v^i\right)&\triangleq M\left(v^{i-1}\overline{v_i}\right) = -\log P_{U^i|Y^N}\left(v^{i-1}\overline{v_i}\left|y^N\right.\right)\label{eq:PM_SCOS}\\
    \overline{S}\left(v^i\right)&\triangleq \overline{M}\left(v^i\right)+ \mcc{b_i}\label{eq:Score_SCOS}
\end{align}
\mcc{where \ed{$\Bar{v}_i\triangleq v_i\edd{\oplus} 1$, \eddd{$v^{i-1}v_i\triangleq v^i$} and} one may choose $b_i$ as the second term in the \ac{RHS} of \eqref{eq:Score2}.} Figure~\ref{fig:scos_tree} shows an example of the decoding for $N=4$ and $K=4$. \ac{SCOS} starts by \ac{SC} decoding to provide an output $v^N$ as the current most-likely leaf, e.g., the black path $(0111)$ in Figure~\ref{fig:scos_tree}. This initial \ac{SC} decoding computes and also stores the \ac{PM} $\overline{M}\left(v^i\right)$ and the score $\overline{S}\left(v^i\right)$ associated with the flipped versions of the decisions $v_i$ for all $i\in\mathcal{A}$, e.g., illustrated as the red paths in Figure~\ref{fig:scos_tree}. Every index $i\in\mathcal{A}$ with $\overline{M}\left(v^i\right) < M(v^N)$ \eddd{constitutes} a flipping set\eddd{, i.e., e}ach set is a singleton at this stage. \edddd{All flipping sets are stored \eddddd{in a} min heap $\mathcal{L}$~\cite{atkinson1986min}} and each member is visited in ascending order according to its score.
\begin{figure*}
	\centering
	\begin{subfigure}[b]{0.3\textwidth}
		\centering
		\begin{tikzpicture}[scale=0.7]
    \footnotesize
\draw [dashed] (0.5,4)--(0.1,3); \draw [dashed]  (0.5,4)--(0.9,3);
\draw [dashed] (2,4)--(1.6,3); \draw [thick] (2,4)--(2.4,3);
\draw [dashed] (3.5,4)--(3.1,3); \draw [dashed] (3.5,4)--(3.9,3);
\draw [dashed] (5,4)--(4.6,3); \draw [dashed] (5,4)--(5.4,3);

\draw [dashed] (1.25,5)--(0.5,4); \draw [thick]  (1.25,5)--(2,4);
\draw [dashed] (4.25,5)--(3.5,4); \draw [dashed] (4.25,5)--(5,4); 

\draw [thick] (2.75,6)--(1.25,5); \draw[dashed] (2.75,6)--(4.25,5);

\draw[dotted] (-0.5,2) to (6,2);
\draw[dotted] (-0.5,3) to (6,3);
\draw[dotted] (-0.5,4) to (6,4);
\draw[dotted] (-0.5,5) to (6,5);
\draw[dotted] (-0.5,6) to (6,6);

\draw [dashed] (0.1,3)--(-0.1,2); \draw [dashed]  (0.1,3)--(0.3,2);
\draw [dashed] (0.9,3)--(0.7,2); \draw [dashed]  (0.9,3)--(1.1,2);
\draw [dashed] (1.6,3)--(1.4,2); \draw [dashed]  (1.6,3)--(1.8,2);
\draw [dashed] (2.4,3)--(2.2,2); \draw [thick]  (2.4,3)--(2.6,2);
\draw [dashed] (3.1,3)--(2.9,2); \draw [dashed]  (3.1,3)--(3.3,2);
\draw [dashed] (3.9,3)--(3.7,2); \draw [dashed]  (3.9,3)--(4.1,2);
\draw [dashed] (4.6,3)--(4.4,2); \draw [dashed]  (4.6,3)--(4.8,2);
\draw [dashed] (5.4,3)--(5.2,2); \draw [dashed]  (5.4,3)--(5.6,2);


\node at (2.5,1.5) {Current most likely leaf: $0111$};
\node at (2.5,1) { };
\end{tikzpicture}
            \vspace{-3mm}
		\caption{}
		\label{fig:scos-tree1}
	\end{subfigure}
	\hfill
	\begin{subfigure}[b]{0.3\textwidth}
		\centering
		\begin{tikzpicture}[scale=0.7]
    \footnotesize
\draw [dashed] (0.5,4)--(0.1,3); \draw [dashed]  (0.5,4)--(0.9,3);
\draw [red,thick] (2,4)--(1.6,3); \draw [thick] (2,4)--(2.4,3);
\draw [dashed] (3.5,4)--(3.1,3); \draw [dashed] (3.5,4)--(3.9,3);
\draw [dashed] (5,4)--(4.6,3); \draw [dashed] (5,4)--(5.4,3);

\draw [red,thick] (1.25,5)--(0.5,4); \draw [thick]  (1.25,5)--(2,4);
\draw [dashed] (4.25,5)--(3.5,4); \draw [dashed] (4.25,5)--(5,4); 

\draw [thick] (2.75,6)--(1.25,5); \draw[red,thick] (2.75,6)--(4.25,5);

\draw[dotted] (-0.5,2) to (6,2);
\draw[dotted] (-0.5,3) to (6,3);
\draw[dotted] (-0.5,4) to (6,4);
\draw[dotted] (-0.5,5) to (6,5);
\draw[dotted] (-0.5,6) to (6,6);

\draw [dashed] (0.1,3)--(-0.1,2); \draw [dashed]  (0.1,3)--(0.3,2);
\draw [dashed] (0.9,3)--(0.7,2); \draw [dashed]  (0.9,3)--(1.1,2);
\draw [dashed] (1.6,3)--(1.4,2); \draw [dashed]  (1.6,3)--(1.8,2);
\draw [red,thick] (2.4,3)--(2.2,2); \draw [thick]  (2.4,3)--(2.6,2);
\draw [dashed] (3.1,3)--(2.9,2); \draw [dashed]  (3.1,3)--(3.3,2);
\draw [dashed] (3.9,3)--(3.7,2); \draw [dashed]  (3.9,3)--(4.1,2);
\draw [dashed] (4.6,3)--(4.4,2); \draw [dashed]  (4.6,3)--(4.8,2);
\draw [dashed] (5.4,3)--(5.2,2); \draw [dashed]  (5.4,3)--(5.6,2);

\node at (2.5,1.5) {Current most likely leaf: $0111$};
\node at (2.5,1) {$\left\{\textcolor{red}{1},0\textcolor{red}{0},01\textcolor{red}{0},011\textcolor{red}{0}\right\}$};
\end{tikzpicture}
            \vspace{-3mm}
		\caption{}
		\label{fig:scos-tree2}
	\end{subfigure}
	\hfill
	\begin{subfigure}[b]{0.3\textwidth}
		\centering
		\begin{tikzpicture}[scale=0.7]
    \footnotesize
\draw [white] (0.5,4)--(0.1,3); \draw [white]  (0.5,4)--(0.9,3);
\draw [red,thick] (2,4)--(1.6,3); \draw [thick] (2,4)--(2.4,3);
\draw [dashed] (3.5,4)--(3.1,3); \draw [dashed] (3.5,4)--(3.9,3);
\draw [dashed] (5,4)--(4.6,3); \draw [dashed] (5,4)--(5.4,3);

\draw [white] (1.25,5)--(0.5,4); \draw [thick]  (1.25,5)--(2,4);
\draw [dashed] (4.25,5)--(3.5,4); \draw [dashed] (4.25,5)--(5,4); 

\draw [thick] (2.75,6)--(1.25,5); \draw[red,thick] (2.75,6)--(4.25,5);

\draw[dotted] (-0.5,2) to (6,2);
\draw[dotted] (-0.5,3) to (6,3);
\draw[dotted] (-0.5,4) to (6,4);
\draw[dotted] (-0.5,5) to (6,5);
\draw[dotted] (-0.5,6) to (6,6);

\draw [white] (0.1,3)--(-0.1,2); \draw [white]  (0.1,3)--(0.3,2);
\draw [white] (0.9,3)--(0.7,2); \draw [white]  (0.9,3)--(1.1,2);
\draw [dashed] (1.6,3)--(1.4,2); \draw [dashed]  (1.6,3)--(1.8,2);
\draw [white] (2.4,3)--(2.2,2); \draw [thick]  (2.4,3)--(2.6,2);
\draw [dashed] (3.1,3)--(2.9,2); \draw [dashed]  (3.1,3)--(3.3,2);
\draw [dashed] (3.9,3)--(3.7,2); \draw [dashed]  (3.9,3)--(4.1,2);
\draw [dashed] (4.6,3)--(4.4,2); \draw [dashed]  (4.6,3)--(4.8,2);
\draw [dashed] (5.4,3)--(5.2,2); \draw [dashed]  (5.4,3)--(5.6,2);

\node at (2.5,1.5) {Current most likely leaf: $0111$};
\node at (2.5,1) {$\mathcal{L}=\left\{01\textcolor{red}{0},\textcolor{red}{1}\right\}$};
\end{tikzpicture}
            \vspace{-3mm}
		\caption{}
		\label{fig:scos-tree3}
	\end{subfigure}
	\vspace{5pt}
	\begin{subfigure}[b]{0.3\textwidth}
		\centering
		\begin{tikzpicture}[scale=0.7]
    \footnotesize
\draw [white] (0.5,4)--(0.1,3); \draw [white]  (0.5,4)--(0.9,3);
\draw [red,thick] (2,4)--(1.6,3); \draw [thick] (2,4)--(2.4,3);
\draw [dashed] (3.5,4)--(3.1,3); \draw [dashed] (3.5,4)--(3.9,3);
\draw [dashed] (5,4)--(4.6,3); \draw [dashed] (5,4)--(5.4,3);

\draw [white] (1.25,5)--(0.5,4); \draw [thick]  (1.25,5)--(2,4);
\draw [dashed] (4.25,5)--(3.5,4); \draw [dashed] (4.25,5)--(5,4); 

\draw [thick] (2.75,6)--(1.25,5); \draw[red,thick] (2.75,6)--(4.25,5);

\draw[dotted] (-0.5,2) to (6,2);
\draw[dotted] (-0.5,3) to (6,3);
\draw[dotted] (-0.5,4) to (6,4);
\draw[dotted] (-0.5,5) to (6,5);
\draw[dotted] (-0.5,6) to (6,6);

\draw [white] (0.1,3)--(-0.1,2); \draw [white]  (0.1,3)--(0.3,2);
\draw [white] (0.9,3)--(0.7,2); \draw [white]  (0.9,3)--(1.1,2);
\draw [dashed] (1.6,3)--(1.4,2); \draw [dashed]  (1.6,3)--(1.8,2);
\draw [white] (2.4,3)--(2.2,2); \draw [thick]  (2.4,3)--(2.6,2);
\draw [dashed] (3.1,3)--(2.9,2); \draw [dashed]  (3.1,3)--(3.3,2);
\draw [dashed] (3.9,3)--(3.7,2); \draw [dashed]  (3.9,3)--(4.1,2);
\draw [dashed] (4.6,3)--(4.4,2); \draw [dashed]  (4.6,3)--(4.8,2);
\draw [dashed] (5.4,3)--(5.2,2); \draw [dashed]  (5.4,3)--(5.6,2);

\node at (2,4) {\textcolor{brown}{$\bullet$}}; 

\node at (2.5,1.5) {Current most likely leaf: $0111$};
\node at (2.5,1) {$\mathcal{L}=\left\{\textcolor{red}{1}\right\}$};

\end{tikzpicture}
            \vspace{-3mm}
		\caption{}
		\label{fig:scos-tree4}
	\end{subfigure}
	\hfill
	\begin{subfigure}[b]{0.3\textwidth}
		\centering
		\begin{tikzpicture}[scale=0.7]
    \footnotesize

\draw[dotted] (-0.5,2) to (6,2);
\draw[dotted] (-0.5,3) to (6,3);
\draw[dotted] (-0.5,4) to (6,4);
\draw[dotted] (-0.5,5) to (6,5);
\draw[dotted] (-0.5,6) to (6,6);

\draw [white] (0.5,4)--(0.1,3); \draw [white]  (0.5,4)--(0.9,3);
\draw [thick] (2,4)--(1.6,3); \draw [thick] (2,4)--(2.4,3);
\draw [dashed] (3.5,4)--(3.1,3); \draw [dashed] (3.5,4)--(3.9,3);
\draw [dashed] (5,4)--(4.6,3); \draw [dashed] (5,4)--(5.4,3);

\draw [white] (1.25,5)--(0.5,4); \draw [thick]  (1.25,5)--(2,4);
\draw [dashed] (4.25,5)--(3.5,4); \draw [dashed] (4.25,5)--(5,4); 

\draw [thick] (2.75,6)--(1.25,5); \draw[red,thick] (2.75,6)--(4.25,5);

\draw [white] (0.1,3)--(-0.1,2); \draw [white]  (0.1,3)--(0.3,2);
\draw [white] (0.9,3)--(0.7,2); \draw [white]  (0.9,3)--(1.1,2);
\draw [thick] (1.6,3)--(1.4,2); \draw [red,thick]  (1.6,3)--(1.8,2);
\draw [white] (2.4,3)--(2.2,2); \draw [thick]  (2.4,3)--(2.6,2);
\draw [dashed] (3.1,3)--(2.9,2); \draw [dashed]  (3.1,3)--(3.3,2);
\draw [dashed] (3.9,3)--(3.7,2); \draw [dashed]  (3.9,3)--(4.1,2);
\draw [dashed] (4.6,3)--(4.4,2); \draw [dashed]  (4.6,3)--(4.8,2);
\draw [dashed] (5.4,3)--(5.2,2); \draw [dashed]  (5.4,3)--(5.6,2);


\node at (2.5,1.5) {Current most likely leaf: $\textcolor{blue}{0100}$};
\node at (2.5,1) {$\mathcal{L}=\left\{\textcolor{red}{1},010\textcolor{red}{1}\right\}$};

\end{tikzpicture}
            \vspace{-3mm}
		\caption{}
		\label{fig:scos-tree5}
	\end{subfigure}
	\hfill
	\begin{subfigure}[b]{0.3\textwidth}
		\centering
		\begin{tikzpicture}[scale=0.7]
    \footnotesize

\draw[dotted] (-0.5,2) to (6,2);
\draw[dotted] (-0.5,3) to (6,3);
\draw[dotted] (-0.5,4) to (6,4);
\draw[dotted] (-0.5,5) to (6,5);
\draw[dotted] (-0.5,6) to (6,6);

\draw [white] (0.5,4)--(0.1,3); \draw [white]  (0.5,4)--(0.9,3);
\draw [thick] (2,4)--(1.6,3); \draw [thick] (2,4)--(2.4,3);
\draw [dashed] (3.5,4)--(3.1,3); \draw [dashed] (3.5,4)--(3.9,3);
\draw [dashed] (5,4)--(4.6,3); \draw [dashed] (5,4)--(5.4,3);

\draw [white] (1.25,5)--(0.5,4); \draw [thick]  (1.25,5)--(2,4);
\draw [dashed] (4.25,5)--(3.5,4); \draw [dashed] (4.25,5)--(5,4); 

\draw [thick] (2.75,6)--(1.25,5); \draw[red,thick] (2.75,6)--(4.25,5);

\draw [white] (0.1,3)--(-0.1,2); \draw [white]  (0.1,3)--(0.3,2);
\draw [white] (0.9,3)--(0.7,2); \draw [white]  (0.9,3)--(1.1,2);
\draw [thick] (1.6,3)--(1.4,2); \draw [white]  (1.6,3)--(1.8,2);
\draw [white] (2.4,3)--(2.2,2); \draw [thick]  (2.4,3)--(2.6,2);
\draw [dashed] (3.1,3)--(2.9,2); \draw [dashed]  (3.1,3)--(3.3,2);
\draw [dashed] (3.9,3)--(3.7,2); \draw [dashed]  (3.9,3)--(4.1,2);
\draw [dashed] (4.6,3)--(4.4,2); \draw [dashed]  (4.6,3)--(4.8,2);
\draw [dashed] (5.4,3)--(5.2,2); \draw [dashed]  (5.4,3)--(5.6,2);


\node at (2.5,1.5) {Current most likely leaf: $0100$};
\node at (2.5,1) {$\mathcal{L}=\left\{\textcolor{red}{1}\right\}$};

\end{tikzpicture}
            \vspace{-3mm}
		\caption{}
		\label{fig:scos-tree6}
	\end{subfigure}
	\vspace{5pt}
	\begin{subfigure}[b]{0.3\textwidth}
		\centering
		\begin{tikzpicture}[scale=0.7]
    \footnotesize

\draw[dotted] (-0.5,2) to (6,2);
\draw[dotted] (-0.5,3) to (6,3);
\draw[dotted] (-0.5,4) to (6,4);
\draw[dotted] (-0.5,5) to (6,5);
\draw[dotted] (-0.5,6) to (6,6);

\draw [white] (0.5,4)--(0.1,3); \draw [white]  (0.5,4)--(0.9,3);
\draw [thick] (2,4)--(1.6,3); \draw [thick] (2,4)--(2.4,3);
\draw [dashed] (3.5,4)--(3.1,3); \draw [dashed] (3.5,4)--(3.9,3);
\draw [dashed] (5,4)--(4.6,3); \draw [dashed] (5,4)--(5.4,3);

\draw [white] (1.25,5)--(0.5,4); \draw [thick]  (1.25,5)--(2,4);
\draw [dashed] (4.25,5)--(3.5,4); \draw [thick,blue] (4.25,5)--(5,4); 

\draw [thick] (2.75,6)--(1.25,5); \draw[black,thick] (2.75,6)--(4.25,5);

\draw [white] (0.1,3)--(-0.1,2); \draw [white]  (0.1,3)--(0.3,2);
\draw [white] (0.9,3)--(0.7,2); \draw [white]  (0.9,3)--(1.1,2);
\draw [thick] (1.6,3)--(1.4,2); \draw [white]  (1.6,3)--(1.8,2);
\draw [white] (2.4,3)--(2.2,2); \draw [thick]  (2.4,3)--(2.6,2);
\draw [dashed] (3.1,3)--(2.9,2); \draw [dashed]  (3.1,3)--(3.3,2);
\draw [dashed] (3.9,3)--(3.7,2); \draw [dashed]  (3.9,3)--(4.1,2);
\draw [dashed] (4.6,3)--(4.4,2); \draw [dashed]  (4.6,3)--(4.8,2);
\draw [dashed] (5.4,3)--(5.2,2); \draw [dashed]  (5.4,3)--(5.6,2);

\node at (2.75,6) {\textcolor{brown}{$\bullet$}}; 

\node at (2.5,1.5) {Current most likely leaf: $0100$};
\node at (2.5,1) {$\mathcal{L}=\varnothing$};

\end{tikzpicture}
            \vspace{-3mm}
		\caption{}
		\label{fig:scos-tree7}
	\end{subfigure}
	\hfill
	\begin{subfigure}[b]{0.3\textwidth}
		\centering
		\begin{tikzpicture}[scale=0.7]
    \footnotesize

\draw[dotted] (-0.5,2) to (6,2);
\draw[dotted] (-0.5,3) to (6,3);
\draw[dotted] (-0.5,4) to (6,4);
\draw[dotted] (-0.5,5) to (6,5);
\draw[dotted] (-0.5,6) to (6,6);

\draw [white] (0.5,4)--(0.1,3); \draw [white]  (0.5,4)--(0.9,3);
\draw [thick] (2,4)--(1.6,3); \draw [thick] (2,4)--(2.4,3);
\draw [dashed] (3.5,4)--(3.1,3); \draw [dashed] (3.5,4)--(3.9,3);
\draw [white] (5,4)--(4.6,3); \draw [white] (5,4)--(5.4,3);

\draw [white] (1.25,5)--(0.5,4); \draw [thick]  (1.25,5)--(2,4);
\draw [red,thick] (4.25,5)--(3.5,4); \draw [thick] (4.25,5)--(5,4); 

\draw [thick] (2.75,6)--(1.25,5); \draw[black,thick] (2.75,6)--(4.25,5);

\draw [white] (0.1,3)--(-0.1,2); \draw [white]  (0.1,3)--(0.3,2);
\draw [white] (0.9,3)--(0.7,2); \draw [white]  (0.9,3)--(1.1,2);
\draw [thick] (1.6,3)--(1.4,2); \draw [white]  (1.6,3)--(1.8,2);
\draw [white] (2.4,3)--(2.2,2); \draw [thick]  (2.4,3)--(2.6,2);
\draw [dashed] (3.1,3)--(2.9,2); \draw [dashed]  (3.1,3)--(3.3,2);
\draw [dashed] (3.9,3)--(3.7,2); \draw [dashed]  (3.9,3)--(4.1,2);
\draw [white] (4.6,3)--(4.4,2); \draw [white]  (4.6,3)--(4.8,2);
\draw [white] (5.4,3)--(5.2,2); \draw [white]  (5.4,3)--(5.6,2);


\node at (2.5,1.5) {Current most likely leaf: $0100$};
\node at (2.5,1) {$\mathcal{L}=\left\{1\textcolor{red}{0}\right\}$};

\end{tikzpicture}
            \vspace{-3mm}
		\caption{}
		\label{fig:scos-tree8}
	\end{subfigure}
	\hfill
	\begin{subfigure}[b]{0.3\textwidth}
		\centering
		\begin{tikzpicture}[scale=0.7]
    \footnotesize

\draw[dotted] (-0.5,2) to (6,2);
\draw[dotted] (-0.5,3) to (6,3);
\draw[dotted] (-0.5,4) to (6,4);
\draw[dotted] (-0.5,5) to (6,5);
\draw[dotted] (-0.5,6) to (6,6);

\draw [white] (0.5,4)--(0.1,3); \draw [white]  (0.5,4)--(0.9,3);
\draw [thick] (2,4)--(1.6,3); \draw [thick] (2,4)--(2.4,3);
\draw [white] (3.5,4)--(3.1,3); \draw [thick] (3.5,4)--(3.9,3);
\draw [white] (5,4)--(4.6,3); \draw [white] (5,4)--(5.4,3);

\draw [white] (1.25,5)--(0.5,4); \draw [thick]  (1.25,5)--(2,4);
\draw [thick] (4.25,5)--(3.5,4); \draw [thick] (4.25,5)--(5,4); 

\draw [thick] (2.75,6)--(1.25,5); \draw[black,thick] (2.75,6)--(4.25,5);

\draw [white] (0.1,3)--(-0.1,2); \draw [white]  (0.1,3)--(0.3,2);
\draw [white] (0.9,3)--(0.7,2); \draw [white]  (0.9,3)--(1.1,2);
\draw [thick] (1.6,3)--(1.4,2); \draw [white]  (1.6,3)--(1.8,2);
\draw [white] (2.4,3)--(2.2,2); \draw [thick]  (2.4,3)--(2.6,2);
\draw [white] (3.1,3)--(2.9,2); \draw [white]  (3.1,3)--(3.3,2);
\draw [thick] (3.9,3)--(3.7,2); \draw [white]  (3.9,3)--(4.1,2);
\draw [white] (4.6,3)--(4.4,2); \draw [white]  (4.6,3)--(4.8,2);
\draw [white] (5.4,3)--(5.2,2); \draw [white]  (5.4,3)--(5.6,2);


\node at (2.5,1.5) {Current most likely leaf: $0100$};
\node at (2.5,1) {$\mathcal{L}=\varnothing$};

\end{tikzpicture}
            \vspace{-3mm}
		\caption{}
		\label{fig:scos-tree9}
	\end{subfigure}
	\caption{(a) Initial SC decoding outputs $v^N = (0111)$ with the corresponding \ac{PM} $M(v^N)$. (b) During the initial SC decoding, the \acp{PM} and scores are computed for branch nodes $\left\{\textcolor{red}{1},0\textcolor{red}{0},01\textcolor{red}{0},011\textcolor{red}{0}\right\}$. (c) The branch nodes with \acp{PM} larger than that of the current most likely leaf are pruned, e.g., we have $M(0\textcolor{red}{0}), M(011\textcolor{red}{0}) > M(0111)$. Suppose \ed{also} that $S(01\textcolor{red}{0})<S(\textcolor{red}{1})$. Then, \edddd{$\mathcal{L}$ stores} all branch nodes with \acp{PM} smaller than that of current most likely leaf, \edddd{where $\mathcal{L}$ is a min heap according to the scores of its members.} (d) The \edddd{candidate with smallest score} is popped from the \edddd{heap} and the decoder returns to the deepest (or nearest) common node. (e) The decision is flipped and SC decoding continues. During decoding, the \edddd{heap} $\mathcal{L}$ and the current most likely leaf are updated. (f) The branch nodes with \acp{PM} larger than that of the current most likely leaf are pruned as in (c) (in this case, a leaf node is removed). (g) Repeat the procedure as in step (d), where we assume $M\left({1}\textcolor{blue}{1}\right)>M\left(0100\right)$. (h) The heap $\mathcal{L}$ is updated when the branch $\left({1}\textcolor{blue}{1}\right)$ was visited. (i) The decoder examines the last member of the heap $\mathcal{L}$ and pops it from $\mathcal{L}$. After reaching the $N$-th decoding phase, suppose that there is no branch node left, which has a smaller \ac{PM} than that of the current most likely leaf, i.e., $\mathcal{L}=\varnothing$. The current most likely leaf is declared as the decision $\hat{u}^N$.}
	\label{fig:scos_tree}
\end{figure*}

\edddd{Let $\mathcal{E}$ be the flipping set with the smallest score in the \edddd{heap} $\mathcal{L}$ and} let index $j\in[N]$ be the deepest common node of the current most-likely leaf and the branch node defined by $\mathcal{E}$ in the decoding tree (see the brown dot in Figure~\ref{fig:scos_tree}\ed{(d)}). The decoder now flips the decision $v_j$ and \ac{SC} decoding continues. The set $\mathcal{E}$ is popped from the \edddd{heap} $\mathcal{L}$. The \acp{PM}~(\ref{eq:PM}) and scores~(\ref{eq:Score}) are calculated again for the flipped versions for decoding phases with $i>j$, $i\in\mathcal{A}$, and the \edddd{heap} $\mathcal{L}$ is enhanced by new flipping sets progressively (similar to \cite{chandesris2018dynamic}). The branch node, including all of its child nodes, is discarded if at any decoding phase its \ac{PM} exceeds that of the current most-likely leaf, i.e., $M(v^N)$.\footnote{This pruning method is similar to the adaptive skipping rule proposed in~\cite{wu2007soft} for ordered-statistics decoding~\cite{D74,fossorier}.} Such a branch cannot output the \ac{ML} decision, since for any valid path $v^i$ the \ac{PM}~(\ref{eq:PM}) is non-decreasing for the next stage, i.e., we have
\begin{align}\label{eq:pm_increase}
	M\left(v^{i}\right) \leq M\left(v^{i+1}\right), \forall v_{i+1}\in\{0,1\}.
\end{align}
For instance, suppose that $M(11)>M(0111)$ in Figure~\ref{fig:scos_tree}(g). Then any path $\tilde{v}^N$ with $\tilde{v}^2 = (1,1)$ cannot be the \ac{ML} decision; hence, it is pruned. If a leaf with lower \ac{PM} is found, then it replaces the current most-likely leaf. The procedure is repeated until one cannot find a more reliable path by flipping decisions, i.e., until $\mathcal{L}=\varnothing$. Hence, \ac{SCOS} decoding implements an \ac{ML} decoder.

\subsection{\mcc{Detailed} Description}\label{sec:pseudo_scos}
This section \mcc{provides} \edd{the details of the proposed \ac{SCOS} decoding with the pseudo codes}\edd{, where a simulation code is provided in \cite{scml}. In the following}, we use type-writer font for the data structures (with an exception for sets) and $1$-based indexing arrays. \mcc{The required data structures together with their size are listed} in Table~\ref{tab:memory_SCOS}. \edd{As we explain the algorithms, we will revisit the relevant data structure from the table. We start with} arrays $\texttt{L}$ and $\texttt{C}$, \edd{which} \ed{contain \acp{LLR} and hard decisions, respectively.} \edd{Recall that there are $\log_2N+1$ layers in a polar code graph and both $\texttt{L}$ and $\texttt{C}$ store $N$ elements in each layer (in contrast to~\cite{tal2015list} where \edd{in total} only $2N-1$ elements are stored)} since we reuse some decoding paths to decrease the computational complexity (similar to \ac{SC-Fano} decoding). \edd{The entry in position $(i,j)$ of array $\texttt{L}$ \edddd{($\texttt{C}$)} is denoted as $\texttt{L}\left[i,j\right]$ \edddd{($\texttt{C}\left[i,j\right]$)}, which is calculated via Algorithm \ref{alg:recursivelyCalcL} \edddd{(\ref{alg:recursivelyCalcC})}. These routines, namely recursivelyCalcL and recursivelyCalcC, are the \ac{LLR}-based versions\cite{balatsoukas2015llr} of \cite[Alg. 3]{tal2015list} and \cite[Alg. 4]{tal2015list}, respectively, and provided as Algorithm \ref{alg:recursivelyCalcL} and Algorithm \ref{alg:recursivelyCalcC} in the appendix for completeness. We also \edddd{name indices} $\lambda$ and $\phi$ \edddd{as layer and phase, respectively,} by adopting the convention of \cite{tal2015list}\edddd{, which are integer-valued inputs to Algorithms \ref{alg:recursivelyCalcL} and \ref{alg:recursivelyCalcC}. U}nlike \cite{tal2015list}, \edddd{the layer and phase} satisfy $1\leq\lambda\leq \log_2N+1$ and $1\leq\phi < 2^\lambda$ due to $1$-based indexing.}
\begin{table*}[t]
\footnotesize
	\centering
	\caption{\mcc{Data structures for \ac{SCOS} decoding.}}
	\setlength\extrarowheight{0pt}
	\setlength{\tabcolsep}{18pt}
	\begin{tabular}{|c|c|c|c|}
		\hline
		name & size & data type & description\\
		\hline
		\hline
		$\texttt{L}$ & $\left(\log_2N+1\right)\times N$ & float & \ac{LLR} \\
		\hline
		$\texttt{C}$ & $\left(\log_2N+1\right)\times N$ & binary & hard decision \\
        \hline
		$\texttt{F}$ & $1$ & $\left\langle \text{set},\text{float},\text{float} \right\rangle$ & structure of a flipping set \\
		\hline
        $\mathcal{E}$, $\mathcal{E}_\texttt{p}$ & \ed{$\leq K$} & \ed{integer (set of indices)}& flipping set\\
		\hline
		$\mathcal{L}$ & $\leq \eta$ & type of $\texttt{F}$ & \edddd{heap} of flipping structures \\
		\hline
		$\hat{\texttt{u}}$, $\texttt{v}$ & $N$ & binary & decoding path \\
        \hline
		$\texttt{b}$ & $N$ & float & precomputed bias term\\
		\hline
		$\texttt{M}$, $\overline{\texttt{M}}$, $\overline{\texttt{S}}$ & $N$ & float & metric \\
		\hline
		$\texttt{M}_{\texttt{cml}}$ & $1$ & float & \ac{PM} of the current most likely leaf \\
		\hline
	\end{tabular}
	\label{tab:memory_SCOS}
\end{table*}

Flipping set structures, denoted by $\texttt{F}$, are triplets containing a set of integer indices (flipping set $\mathcal{E}$), a \ac{PM} and a score. The \edddd{heap} $\mathcal{L}$ contains multiple flipping structures $\texttt{F}=\left\langle\mathcal{E},\overline{\texttt{M}}_\mathcal{E},\overline{\texttt{S}}_\mathcal{E}\right\rangle$,\footnote{Observe that the heap $\mathcal{L}$ in Figure~\ref{fig:scos_tree} is slightly different for simplicity.} where $\overline{\texttt{M}}_\mathcal{E}$ and $\overline{\texttt{S}}_\mathcal{E}$ are the respective \ac{PM} and the score associated to the flipping set $\mathcal{E}$, as defined in \eqref{eq:PM_SCOS} and \eqref{eq:Score_SCOS}. The size of $\mathcal{L}$ is constrained by a user-defined parameter $\eta$, \eddd{which will define the space complexity of the decoder}. Given two flipping sets, namely $\mathcal{E}$ and $\mathcal{E}_p$, Algorithm~\ref{alg:FindStartIndex} is the procedure used to find the decoding stage to which the decoder should return, i.e., the deepest common node as illustrated in Figure~\ref{fig:scos_tree}(d).
\begin{algorithm}[t]
        \footnotesize
	\SetNoFillComment
	\DontPrintSemicolon
	\SetKwInOut{Input}{Input}\SetKwInOut{Output}{Output}
	\Input{\mcc{flipping sets $\mathcal{E}$ and $\mathcal{E}_\texttt{p}$}}
	\Output{first different index}	
	\BlankLine
	\For{$i=1,2,\dots,N$}{
		\If{$\left(i\in\mathcal{E}\right) \oplus \left(i\in\mathcal{E}_\texttt{p} \right)$}{
			\Return{$i$}
		}
	}
	\caption{$\text{FindStartIndex}\left(\mathcal{E},\mathcal{E}_\texttt{p}\right)$}
	\label{alg:FindStartIndex}
\end{algorithm}

Algorithm~\ref{alg:SCDec} is generalized \ac{SC} decoding, which can start \ac{SC} decoding at any decoding phase and continue decoding until a termination criterion is satisfied. Then, it returns the phase at which the decoding is terminated. The modifications compared to the original SC decoding are highlighted as blue in the pseudo code. Before their detailed descriptions, we recall data structures needed from Table \ref{tab:memory_SCOS}. The notation $\texttt{v}\left[i\right]$ refers to the $i$-th entry of an array $\texttt{v}$, \edd{where binary vectors $\texttt{v}$ and $\hat{\texttt{u}}$ are the currently processed path and the current most-likely one, respectively}. \edd{Unless otherwise stated, t}he entries of \edd{vector $\texttt{b}$} are computed offline via
\begin{align}\label{eq:bias_term}
	\texttt{b}[i] = \sum_{j=1}^{i}\log\left(1-p_{j}\right),~i\in[N].
\end{align}
The entries of length-$N$ vectors $\texttt{M}$ and $\overline{\texttt{M}}$ correspond to \acp{PM} along traversed paths and the flipped versions, respectively. Vector $\overline{\texttt{S}}$ contains the scores used for the search schedule of the proposed decoder. The modified \ac{SC} decoding takes as input an integer $i_\text{start}\in[N]$ and a flipping set $\mathcal{E}$ and outputs another index $i_\text{end}$ such that $i_\text{start} < i_\text{end}\leq N$. Along the way, the algorithm updates the vectors containing \acp{PM} and scores, namely $\texttt{M}$ and $\overline{\texttt{M}}$ and $\overline{\texttt{S}}$, where the details are itemized as follows.
\begin{itemize}
    \item \edd{The standard subroutine HardDec (\texttt{lines 8} and \texttt{10}) takes a real-valued \ac{LLR} as the input and returns a decision according to \eqref{eq:dec_fnc}. In addition, CalcPM (\texttt{lines 12} and \texttt{14}) takes a real-valued \ac{PM}, a binary decision and a real-valued \ac{LLR} as inputs and updates the \ac{PM} using \eqref{eq:PM}.}
    \item One can start at any decoding phase $i_\text{start}$ \ed{with no additional computational cost} \edd{(\texttt{line 2})}.
    \item \edd{The $i$-th entry of vector $\texttt{M}$ is updated in each decoding phase $i$} (\texttt{line 14}).
    \item The decisions are flipped at the decoding phases \edd{corresponding to the current flipping set, i.e., if $i\in\mathcal{E}$} (\texttt{lines 7-8}).
    \item \edd{The \acp{PM} and the scores of the potential flipping sets are computed for decoding phases after the largest one in the current flipping sets, i.e., $\overline{\texttt{M}}[i]$ and $\overline{\texttt{S}}[i]$ with $i\in\mathcal{A}$ and $i>\text{maximum}\left(\mathcal{E}\right)$} (\texttt{lines 11-13}).
    \item If a more likely leaf (i.e., a path of length-$N$ \edd{with smaller \ac{PM}}) is found, \edd{$\hat{\texttt{u}}$ and $\texttt{M}_\texttt{cml}$ are updated and the decoding phase $N$ is returned as $i_\text{end}$ (\texttt{lines 20-24})}.
    \item \edd{For any $i$, if} \ac{PM} $\texttt{M}[i]$ is larger than $\texttt{M}_\texttt{cml}$, stop SCDec function and return the current phase $i$ as $i_\text{end}$ (\texttt{line 15-16}).
    \item \ed{If $i$-th bit is dynamically frozen, then the computation of $\texttt{v}[i]$ follows the constraints, i.e., using \ac{RHS} of \eqref{eq:dfrozen} where the coefficients $v_{j,i}$ are specified by the construction \edd{(\texttt{line 5})}.}
\end{itemize}
\begin{algorithm}[t]
        \footnotesize
	\SetNoFillComment
	\DontPrintSemicolon
	\SetKwInOut{Input}{Input}\SetKwInOut{Output}{Output}
	\Input{\textcolor{blue}{start index $i_\text{start}$, flipping set $\mathcal{E}$}}
	\Output{\textcolor{blue}{end index $i_\text{end}$}}
	\BlankLine
	$m=\log_2N$\\
	\For{$i=\textcolor{blue}{i_\text{start}},\dots,N$}{
		$\text{recursivelyCalcL}\left(m+1,i-1\right)$\\
		\eIf{$i\notin\mathcal{A}$}{
			$\texttt{v}\left[i\right] = 0$\tcp*[l]{\textcolor{blue}{compute $\texttt{v}\left[i\right]$ if dynamic frozen}}
		}{
			\textcolor{blue}{\eIf{$i\in\mathcal{E}$}{
					$\texttt{v}\left[i\right] = \text{HardDec}\left(\texttt{L}\left[m+1,i\right]\right)\oplus 1$
				}{
					\textcolor{black}{$\texttt{v}\left[i\right] = \text{HardDec}\left(\texttt{L}\left[m+1,i\right]\right)$}
			}}
			\textcolor{blue}{\If{$i>\text{maximum}\left(\mathcal{E}\right)$}{
					$\overline{\texttt{M}}\left[i\right]=\text{CalcPM}\left(\texttt{M}\left[i-1\right],\texttt{v}\left[i\right]\oplus 1,\texttt{L}\left[m+1,i\right]\right)$\\
					$\overline{\texttt{S}}\left[i\right]=\overline{\texttt{M}}\left[i\right]+\texttt{b}\left[i\right]$
			}}
		}
		\textcolor{blue}{$\texttt{M}\left[i\right]=\text{CalcPM}\left(\texttt{M}\left[i-1\right],\texttt{v}\left[i\right],\texttt{L}\left[m+1,i\right]\right)$\\
			\If{$\texttt{M}\left[i\right] \geq \texttt{M}_\texttt{cml}$}{
				\Return{$i$}
		}}
		$\texttt{C}\left[m+1,i\right]=\texttt{v}\left[i\right]$\\
		\If{$i\mod 2 = 0$}{
			$\text{recursivelyCalcC}\left(m+1,i-1\right)$
		}
	}
	\textcolor{blue}{\If{$\texttt{M}\left[N\right] < \texttt{M}_\texttt{cml}$}{
			$\texttt{M}_\texttt{cml}=\texttt{M}\left[N\right]$\\
			\For{$i=1,2,\dots,N$}{
				$\hat{\texttt{u}}\left[i\right]=\texttt{v}\left[i\right]$
			}
			\Return{$N$}
	}}
	\caption{$\text{SCDec}\left(\textcolor{blue}{i_\text{start},\mathcal{E}}\right)$}
	\label{alg:SCDec}
\end{algorithm}

Algorithm~\ref{alg:SCOS} is the main loop of \ac{SCOS} \edd{decoding. Naturally, the \edddd{heap} of flipping structures and the previous flipping set are initialized as null and the \ac{PM} of the current most-likely leaf as $+\infty$}. After the initial \ac{SC} decoding (\texttt{line 4}), $\texttt{M}_\texttt{cml}$ is updated to the \ac{PM} of the \ac{SC} estimate. Then, a tree search is performed \edd{in order to find the most-likely estimate (\texttt{lines 8-16}), where the candidates are ordered by their scores}. Many sub-trees are pruned \edd{thanks to} the threshold $\texttt{M}_\texttt{cml}$ \edd{(\texttt{lines 6-7 and 14-15})}, i.e., the \ac{PM} of the current most likely leaf. The stopping condition of the \enquote{while loop} (\texttt{line 8} with $\mathcal{L}=\varnothing$) implies that the most likely codeword is found, i.e., there cannot be any other codeword with a smaller \ac{PM}. The estimate \edd{with \ac{PM}} $\texttt{M}_\texttt{cml}$ is output as the decision \edd{(\texttt{line 17})}.
\begin{algorithm}[t]
        \footnotesize
	\SetNoFillComment
	\DontPrintSemicolon
	\SetKwInOut{Input}{Input}\SetKwInOut{Output}{Output}
	\Input{\ac{LLR}s $\ell^N$}
	\Output{$\hat{\texttt{u}}$}
	\BlankLine
	$\mathcal{L}=\varnothing, \mathcal{E}_\texttt{p}=\varnothing, \texttt{M}_\texttt{cml}=+\infty$\\
	\For{$i=1,2,\dots,N$}{
		$\texttt{L}\left[1,i\right]=\ell_i$
	}
	$\text{SCDec}\left(1,\varnothing\right)$\\
	\For{$i=1,2,\dots,N$}{
		\If{$i\in\mathcal{A}$ and $\overline{\texttt{M}}\left[i\right]<\texttt{M}_\texttt{cml}$}{
			$\text{Insert\edddd{Heap}}\left(\left\langle\{i\},\overline{\texttt{M}}\left[i\right],\overline{\texttt{S}}\left[i\right]\right\rangle\right)$\\
		}
	}
	\While{$\mathcal{L}\neq\varnothing$}{
		$\left\langle \mathcal{E},\overline{\texttt{M}}_\mathcal{E},\overline{\texttt{S}}_\mathcal{E}\right\rangle = \text{pop\edddd{Min}}\left(\mathcal{L}\right)$\\
		\If{$\overline{\texttt{M}}_{\mathcal{E}}< \texttt{M}_\texttt{cml}$}{
			$i_\text{start}=\text{FindStartIndex}\left(\mathcal{E},\mathcal{E}_\texttt{p}\right)$\\
			$i_\text{end}=\text{SCDec}\left(i_\text{start},\mathcal{E}\right)$\\
			\For{$i=\text{maximum}\left(\mathcal{E}\right)+1,\dots,i_\text{end}$}{
				\If{$i\in\mathcal{A}$ and $\overline{\texttt{M}}\left[i\right]<\texttt{M}_\texttt{cml}$}{
					$\text{Insert\edddd{Heap}}\left(\left\langle\mathcal{E}\cup\{i\},\overline{\texttt{M}}\left[i\right],\overline{\texttt{S}}\left[i\right]\right\rangle\right)$\\
				}
			}
			$\mathcal{E}_\texttt{p} =\mathcal{E}$
		}
	}
	\Return{$\hat{\texttt{u}}$}
	\caption{$\text{SCOS}\left(\ell^N\right)$}
	\label{alg:SCOS}
\end{algorithm}
\eddd{\begin{remark}
    Observe that each member in the \edddd{heap} $\mathcal{L}$, where $|\mathcal{L}|\leq\eta$, stores a set of integers with maximum size of $K$ and two float metrics. Hence, $\mathcal{L}$ stores at most $K\eta$ integers and $2\eta$ floats. In addition, recall that the arrays $\texttt{L}$ and $\texttt{C}$ store $N\log_2N + N$ elements each in contrast to $\eta\times\left(2N-1\right)$, which is the case, e.g., in \ac{SCS} decoding with stack size $D=\eta$~\cite{miloslavskaya2014sequential,trifonov2018score}. Other data structures listed in Table \ref{tab:memory_SCOS} are of size at most $N$ each. In total, \ac{SCOS} stores at most $N\log_2N + 5N + 2\eta + 1$ floats, $NR\eta$ integers, and $N\log_2N + 3N$ bits, where $R$ is the code rate.
\end{remark}}

\section{Complexity \edddd{and Performance} Considerations}
\label{sec:comp}
\edd{We adopt \emph{number of node-visits} in the decoding tree as a proxy for the complexity. \edddd{A node visit occurs each time line 3 is executed in Algorithm \ref{alg:SCDec}, i.e., each time a node is visited in decoding tree, which is, for instance, provided for the case of $N=4$ in Figure \ref{fig:scos_tree}.} \ed{Note that this \edd{does} not refer to the exact complexity; however, it still provides a very good proxy \cite[Sec. 4.2]{YFV21}, which \edd{is} used very often \edd{in} prior works, see, e.g., \cite{arikan2019sequential,jeong2019sc,MMQ20,Moradi21}, among many} \edd{other references.\footnote{\edd{In our simulations, we count the number of arithmetic operations as well, which will be provided later in Sections \ref{sec:comp_perf} and \ref{sec:numerical} for various comparisons.}}} To this end, let $\lambda\left(y^N\right)$ be the number of node-visits in the decoding tree by \ac{SCOS} decoding normalized by block length for channel output $y^N$. Similar to other sequential decoders~\cite{jacobs1967lower}, it is a \ac{RV} defined as $\Lambda\triangleq \lambda\left(Y^N\right)$. On the contrary,} the number $\lambda_\text{SC}\edd{(y^N)}$ of \edd{normalized} node-visits for \ac{SC} decoding is simply \edd{$1$} \mcc{independent of the channel output $y^N$}.

\subsection{\edd{Average Number of Node-Visits for ML Decoding}}
\label{sec:compML}
\edd{In the following, we are interested in the average behaviour of $\Lambda$ when there is no limit in the number of node-visits. To this end,} consider the set of partial input sequences $v^i$, $i\in[N]$, with a smaller \ac{PM} than the \ac{ML} decision $\hat{u}^N_{\text{ML}}$. \ed{Observe that there are $i$ node-visits for \ac{SC} decoding for any decoding path $v^i$.}
\begin{definition}\label{def:setV}
    For the channel output $y^N$ and the binary sequence $v^N\in\mathcal{X}^N$, define the set
    \begin{equation}
        \mathcal{V}\left(v^N,y^N\right)\triangleq\bigcup_{i=1}^N\left\{u^i\in\{0,1\}^i: M\left(u^i\right) \leq M\left(v^N\right)\right\}.\label{eq:set_def}
    \end{equation}
\end{definition}
\begin{lemma}\label{lem:comp}
    We have  
    \begin{equation}
        \edd{N}\lambda\left(y^N\right)\geq \left|\mathcal{V}\left(\hat{u}^N_{\text{ML}}\left(y^N\right),y^N\right)\right|\label{eq:complexity_realization}
    \end{equation}
    and the \edd{\ac{ANV} normalized to that of SC decoding} is lower bounded as
    \begin{equation}
        \mathbb{E}\left[\Lambda\right]\geq \frac{1}{N}\mathbb{E}\left[\left|\mathcal{V}\left(\hat{u}^N_{\text{ML}}\left(Y^N\right),Y^N\right)\right|\right].\label{eq:complexity_average}
    \end{equation}
\end{lemma}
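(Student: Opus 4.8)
The plan is to derive the per-realization bound \eqref{eq:complexity_realization} first and then obtain \eqref{eq:complexity_average} by taking expectations and normalizing by $\lambda_\text{SC}=N$. The entire weight of the argument falls on a single claim: for a fixed $y^N$, \ac{SCOS} must visit \emph{every} node of $\mathcal{V}\left(\hat{u}^N_{\text{ML}},y^N\right)$. Since these are $\left|\mathcal{V}\right|$ distinct partial sequences and \ac{SCOS} processes (computes the \ac{PM} of) each one at least once, each contributing to the node-visit count, this immediately gives $\lambda\left(y^N\right)\geq\left|\mathcal{V}\right|$.

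First I would invoke the monotonicity \eqref{eq:pm_increase} to note that $\mathcal{V}$ is prefix-closed: if $M\left(u^i\right)\leq M\left(\hat{u}^N_{\text{ML}}\right)$, then every prefix $u^j$ with $j<i$ satisfies $M\left(u^j\right)\leq M\left(u^i\right)\leq M\left(\hat{u}^N_{\text{ML}}\right)$ and hence also lies in $\mathcal{V}$. Thus $\mathcal{V}$ forms a subtree rooted at the origin of the decoding tree. Because \ac{SCOS} implements \ac{ML} decoding, the terminal threshold is $\texttt{M}_\texttt{cml}=M\left(\hat{u}^N_{\text{ML}}\right)$, and since $\texttt{M}_\texttt{cml}$ is non-increasing along the run (it is updated only when a smaller \ac{PM} is found, see \texttt{lines 20--23} of Algorithm~\ref{alg:SCDec}), we always have $\texttt{M}_\texttt{cml}\geq M\left(\hat{u}^N_{\text{ML}}\right)$. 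Consequently, the pruning rule (discard a branch once its \ac{PM} exceeds $\texttt{M}_\texttt{cml}$) never removes a node of $\mathcal{V}$, and no path leading to a node of $\mathcal{V}$ is cut short, since all of its prefixes lie in $\mathcal{V}$ as well.

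The key step, and the main obstacle, is to pass from this \enquote{not prunable} property to the operational claim that the flipping-set mechanism actually reaches each node of $\mathcal{V}$. Every such node corresponds to a flipping set $\mathcal{E}=\left\{i_1<\dots<i_k\right\}\subseteq\mathcal{A}$, which \ac{SCOS} constructs incrementally: the singleton $\left\{i_1\right\}$ is recorded during the initial \ac{SC} decoding, and each $\left\{i_1,\dots,i_{j}\right\}$ is recorded while exploring $\left\{i_1,\dots,i_{j-1}\right\}$ at phase $i_{j}>\text{maximum}\left(\left\{i_1,\dots,i_{j-1}\right\}\right)$ through $\overline{\texttt{M}}$ in \texttt{lines 11--13} of Algorithm~\ref{alg:SCDec}. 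The paths associated with $\left\{i_1,\dots,i_{j}\right\}$ and with $\left\{i_1,\dots,i_k\right\}$ coincide up to depth $i_{j}$ (the deeper set flips nothing new in $[1,i_j]$), so monotonicity \eqref{eq:pm_increase} forces the recorded value $\overline{\texttt{M}}\left[i_j\right]$ to be no larger than the \ac{PM} of the target node. Hence whenever the target lies in $\mathcal{V}$, every intermediate flipping set is below the threshold, is inserted into $\mathcal{L}$ by Algorithm~\ref{alg:InsertList}, and is therefore eventually popped and visited, because Algorithm~\ref{alg:SCOS} terminates only when $\mathcal{L}=\varnothing$. The delicate part is exactly this bookkeeping around the $\text{maximum}\left(\mathcal{E}\right)$-gated recording; ties $M\left(u^i\right)=M\left(\hat{u}^N_{\text{ML}}\right)$ (which matter only because \eqref{eq:set_def} uses a non-strict inequality while the algorithm tests strict inequalities) form a probability-zero event over the biAWGN channel and can be neglected under the expectation.

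Putting these together establishes \eqref{eq:complexity_realization}. Finally, \eqref{eq:complexity_average} follows by taking expectations with $\Lambda=\lambda\left(Y^N\right)$, giving $\mathbb{E}\left[\Lambda\right]\geq\mathbb{E}\left[\left|\mathcal{V}\left(\hat{u}^N_{\text{ML}}\left(Y^N\right),Y^N\right)\right|\right]$, and dividing through by $\lambda_\text{SC}=N$ to obtain the normalized lower bound.
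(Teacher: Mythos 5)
Your proposal is correct and takes essentially the same approach as the paper: the paper's proof of Lemma~\ref{lem:comp} is a two-sentence appeal to the definition \eqref{eq:set_def} (with $v^N$ replaced by $\hat{u}^N_{\text{ML}}$) and to ``the description of \ac{SCOS} decoding'' for \eqref{eq:complexity_realization}, followed by taking expectations and using $\lambda_\text{SC}=N$ for \eqref{eq:complexity_average}. Your prefix-closure observation, the invariant $\texttt{M}_\texttt{cml}\geq M\left(\hat{u}^N_{\text{ML}}\right)$, the flipping-set reachability argument, and the measure-zero handling of ties simply spell out the details that the paper leaves implicit in that phrase.
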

\begin{proof}
\edddd{Observe that each member of set $\mathcal{V}\left(v^N,y^N\right)$ corresponds to a node in the decoding tree. Then, i}nequality~\eqref{eq:complexity_realization} follows from~\eqref{eq:set_def} by replacing $v^N$ with the \ac{ML} decision $\hat{u}^N_{\text{ML}}$ and \edddd{observing that \ac{SCOS} decoding needs to visit each node with a \ac{PM} smaller than or equal to that of the \ac{ML} decision}. Since~\eqref{eq:complexity_realization} is valid for any $y^N$, the bound (\ref{eq:complexity_average}) follows.
\end{proof}

\edd{From now on, we refer to $\mathbb{E}\left[\Lambda\right]$ as \ac{ANV} by keeping in mind that it is normalized to the block length. Figure \ref{fig:scos_pac} provides performance\eddd{, the \ac{ANV} and the histogram for the node visits} vs. \ac{SNR} (in $E_b/N_0$, where $E_b$ is here the energy per information bit and $N_0$ is the single-sided noise power spectral density) over the \ac{biAWGN} channel\cite[Sec. 4]{Richardson:2008:MCT:1795974} for the $(128, 64)$ \ac{PAC} code~\cite{arikan2019sequential} under \ac{SCOS} decoding where $\eta=\infty$.} Observe that the lower bound on the \ac{ANV} given by \eqref{eq:complexity_average} is validated\edddd{\footnote{\edddd{After finding the \ac{PM} of \ac{ML} decision for each transmission, the number of nodes in the decoding tree with lower \ac{PM} than that of the \ac{ML} decision is counted via a modified \ac{SCOS} decoding, which is introduced in Section \ref{sec:SCOS_max_PM}. Then, its average provides the \ac{RHS} of Eq. \eqref{eq:complexity_average}.}}} and is tight for high \ac{SNR}. However, the bound appears to be loose at low \ac{SNR} values mainly for two reasons: (i) usually the initial \ac{SC} decoding estimate $v^N$ is not the \ac{ML} decision and extra nodes in the difference set $\mathcal{V}\left(v^N,y^N\right)\setminus\mathcal{V}\left(v^N_{\text{ML}},y^N\right)$ are visited and (ii) \ac{SCOS} decoding may visit the same node multiple times and this cannot be tracked by a set definition. The histogram for the node visits, where the intervals are given as integer multiples of node visit of SC decoding, reveals the efficiency of the proposed decoder for high SNR regime. In particular, the probability that SCOS decoding needs a number of node visits larger than that of $8$ times of SC decoding to guarantee returning \ac{ML} decision is roughly $2\times 10^{-4}$ when $E_b/N_0=4$ dB.
\begin{figure}
\centering
\begin{subfigure}{0.49\textwidth}
    \begin{tikzpicture}[scale=1]
\footnotesize
\begin{semilogyaxis}[
legend columns=2,
legend cell align=left,
ymin=0.000001,
ymax=0.1,
width=3.2in,
height=2.1in,
grid=both,
xmin = 1,
xmax = 4,
xlabel = $E_b/N_0$ in dB,
ylabel = FER,
xtick={1,1.5,2,2.5,3,3.5,4,4.5,5,5.5}
]

\addplot[red, mark = o, only marks]
table[x=snr,y=fer]{snr fer
	0 0.4405
	0.25 0.3175
	0.5 0.246
	0.75 0.1655
	1 0.099000
	1.25 0.061000
	1.5 0.035000
	1.75 0.017933
	2 0.008000
	2.25 0.003292
	2.5 0.001176
	2.75 0.00043668
	3 0.000138
	3.25 3.8432e-05
	3.5 1.0009e-05
	3.75 3.0e-06
	4 7.9e-07
};\addlegendentry{\ac{SCOS}}

\addplot[brown, line width = 1.25pt]
table[x=snr,y=fer]{snr fer
	0 0.4405
	0.25 0.3175
	0.5 0.246
	0.75 0.1655
	1 0.099000
	1.25 0.061000
	1.5 0.035000
	1.75 0.017933
	2 0.008000
	2.25 0.003292
	2.5 0.001176
	2.75 0.00043668
	3 0.000138
	3.25 3.8432e-05
	3.5 1.0009e-05
	3.75 3.0e-06
	4 7.9e-07
};\addlegendentry{\ac{ML}}

\addplot[brown, dashed, line width=1.25pt]
table[x=snr,y=fer]{snr fer
1.000000000000000   0.143839366949861
1.250000000000000   0.082213310092274
1.500000000000000   0.042497569653307
1.750000000000000   0.019950901922751
2.000000000000000   0.008586425015708
2.250000000000000   0.003223271460151
2.500000000000000   0.001100579891236
2.750000000000000   0.000330079099603
3.000000000000000   0.000086745193797
3.250000000000000   0.000019995279927
3.500000000000000   0.000004066498212
3.750000000000000   0.000000708069023
};\addlegendentry{RCU}

\addplot[brown, dotted, line width=1.25pt]
table[x=snr,y=fer]{snr fer
1.000000000000000   0.084450067574090
1.250000000000000   0.046789395657259
1.500000000000000   0.023247677313516
1.750000000000000   0.010448670867371
2.000000000000000   0.004249278249449
2.250000000000000   0.001546310966604
2.500000000000000   0.000484410332805
2.750000000000000   0.000130900052036
3.000000000000000   0.000029657241148
3.250000000000000   0.000005796169358
3.500000000000000   0.000000972556776
};\addlegendentry{MC}

\end{semilogyaxis}
	\begin{axis}[/pgf/number format/.cd,precision=0,
width=3.2in,
height=2.1in,
xmin = 1,
xmax = 4,
        legend style={at={(0.86,0.725)},anchor=north,cells={align=left}},
		ylabel = {$\mathbb{E}\left[\Lambda\right]$},
		xticklabels=none,
		axis y line*=right,
		axis x line*=none,
		ymin=0, ymax=1e2,
		ytick = {1,20,40,60, 80, 100}
		]

\addplot[red, mark = *, dashed]
table[x=snr,y=fer]{snr fer
	0 460.1
	0.25 372.04
	0.5 311.22
	0.75 211.98
	1 156.06
	1.5 61.176
	2 20.436
	2.5 5.8956
	3 2.103
	3.5 1.264679
	4 1.062140
};\addlegendentry{ANV}

\addplot[black, dash dot, line width=1.25pt]
table[x=snr,y=fer]{snr fer
	0 170.43
	0.25 132.12
	0.5 101.3
	0.75 76.784
	1 50.058
	1.25 35.447
	1.5 21.566
	1.75 13.517
	2 7.6471
	2.25 4.4946
	2.5 2.9054
	2.75 1.9874
	3 1.5279
	3.25 1.2769
	3.5 1.149
	3.75 1.079853
	4.0 1.041982
};\addlegendentry{\eqref{eq:complexity_average}}

\end{axis}

\end{tikzpicture}
    \vspace{-5mm}
    \caption{FER/\ac{ANV} vs. $E_b/N_0$}
    \label{fig:perf}
\end{subfigure}
\begin{subfigure}{0.49\textwidth}
\hspace{3pt}
    	\begin{tikzpicture}[scale=1]
	\footnotesize
	\begin{axis}[
	ybar=1pt,
width=3.2in,
height=1.5in,
tickwidth=0pt,
ymajorgrids=true,
	bar width=0.12cm,
 symbolic x coords={0.5,1,1.5,2,2.5,3,3.5,4,4.5,5,5.5,6,6.5,7,7.5,8,8.5,9,9.5,$\geq$10,10.5},
	xtick=data,
	xlabel = $i$,
	ylabel = $\text{Pr}\left(i+1>\Lambda\geq i\right)$,
	ymin=0,
	ymax=1,
 tick align=inside,
 xtick style={/pgfplots/major tick length=0pt, },
    extra x ticks={0.5,...,10.5},
    extra x tick labels=\empty,
    extra x tick style={
    grid=major,
    xtick style={/pgfplots/major tick length=0pt,},
    },
    xmin = 0.5,
    xmax = 10.5,
	]
	\addplot[fill=brown,brown] table[x=e,y=n]{e n
1 0.45413
2 0.10969
3 0.066754
4 0.047909
5 0.035823
6 0.028121
7 0.022444
8 0.01922
9 0.015816
$\geq$10 0.2001
	};\addlegendentry{$E_b/N_0={2}~\text{dB}$}
	
	\addplot[fill=red!80!white,red!80!white] table[x=e,y=n]{e n
1 0.84158
2 0.065813
3 0.028324
4 0.015415
5 0.010114
6 0.0071199
7 0.0050587
8 0.0038656
9 0.0030208
$\geq$10 0.019686
	};\addlegendentry{$E_b/N_0={3}~\text{dB}$}
	
	\addplot[fill=blue,blue] table[x=e,y=n]{e n
1 0.98664
2 0.0085768
3 0.0023237
4 0.00097031
5 0.00049246
6 0.00028627
7 0.00018156
8 0.00012022
9 8.3132e-05
$\geq$10 0.00032138
	};\addlegendentry{$E_b/N_0={4}~\text{dB}$}

	\end{axis}
	\end{tikzpicture}
    \vspace{-5mm}
    \caption{\ed{Histogram vs. $E_b/N_0$}}
    \label{fig:hist}
\end{subfigure}
\caption{\edd{FER/\ac{ANV}\eddd{/histogram for node visits} vs. $E_b/N_0$ over the \ac{biAWGN} channel for the $\left(128,64\right)$ \ac{PAC} code under \ac{SCOS} decoding compared to relative \ac{RCU} bound/lower bound \eqref{eq:complexity_average}.}}
\label{fig:scos_pac}
\end{figure}
\begin{remark}
    Recall that the \ac{PM}~(\ref{eq:PM}) is calculated using the \ac{SC} decoding schedule, i.e., it ignores the frozen bits coming after the current decoding phase $i$. This means the size of the set~\eqref{eq:set_def} tends to be smaller for codes more suited for \ac{SC} decoding, e.g., polar codes, while it is larger for other codes such as \ac{RM} codes. This principle is also observed when decoding via \ac{SCL} decoding, i.e., the required list size to approach \ac{ML} performance grows when one ``interpolates'' from polar to \ac{RM} codes\cite{Mondelli14,RMpolar14,CP21}. This observation motivates \ed{us to introduce dRM-polar code ensemble in Definition~\ref{def:dRM}, whose random instances} provide a good performance vs. complexity trade-off under \ac{SCOS} decoding for moderate code lengths, e.g., $N=256$ bits.
\end{remark}

\subsection{Performance under Maximum Complexity Constraints}
\label{sec:comp_perf}

\edd{The proxy of node-visits is particularly useful when one wants to limit the worst-case complexity of \emph{polar code decoders} leveraging the structure of the Hadamard matrix in a unified manner. Observe that, for a given code construction \edd{specified by set $\mathcal{A}$}, the number $\lambda_{\text{SCL}}(L,\mathcal{A})$ of node-visits for \ac{SCL} decoding with list size $L$ is constant and upper bounded as $\lambda_{\text{SCL}}(L,\mathcal{A})\leq \edd{L}$. Given a polar code decoder, one may force it to satisfy a maximum number of node-visits such that $\lambda\left(y^N\right)\leq L$, for some positive integer $L$ at each decoding attempt with the hope that the worst-case complexity can be comparable to de facto reference \ac{SCL} decoding with list size $L$. For instance, sequential decoders use similar parameters, call $L$, to limit their worst-case complexity comparable to that of \ac{SCL} decoding with list size $L$\cite[Sec. III]{trifonov2018score}, \cite[Sec. V]{MMQ20}.} \eddd{Similarly,} \edd{\ac{SCOS} is modified by returning the existing most-likely candidate whenever a pre-defined maximum number of node-visits $\lambda_\text{max}$ is reached, at the expense of suboptimality. \edddd{Note that if $\lambda_\text{max}\geq 1$, SCOS decoding will always return a valid codeword}.}

\edd{Figures \ref{fig:scos_dfRMshort} and \ref{fig:scos_dfRMmoderate} provide performance and the \ac{ANV} vs. \ac{SNR} (in $E_b/N_0$) for short- and moderate-length \ac{dRM} and \ac{dRM}-polar codes, i.e., $N\in\{64,128\}$ and $N\in\{256,512\}$, respectively, of various rates ($0.14<R<0.92$) under \ac{SCOS} decoding, where $\lambda_\text{max}$ \eddd{and $\eta$ are both} set to $10$, $100$ and $5000$ for cases where $N=64$, $N=128$ and $N\in\{256,512\}$ for the simulations, respectively. For \ac{dRM} codes, the information sets are the same as the \ac{RM} code with the same block length and dimension, where the dynamic frozen bit constraints are randomly chosen. The $(256,154)$ dRM-polar code with a rate of $\approx 0.6$ is chosen uniformly at random from the ensemble of Definition \ref{def:dRM}, where the information set $\mathcal{A}$ is defined as in \cite{RMpolar14} with the mother RM$(4,8)$ code and the polar rule given by setting $\beta=2^{\nicefrac{1}{4}}$ in \cite{he2017beta}. Empirical \emph{ML lower bounds} of~\cite{tal2015list} are also plotted.} \edd{\ac{ML} decoding performance is \emph{approached} if the performance matches the simulated lower bounds.}
\begin{figure*}
\centering
\begin{subfigure}{0.49\textwidth}
    \begin{tikzpicture}[scale=1]
\footnotesize
\begin{semilogyaxis}[
legend style={at={(1,1)},anchor=north east},
legend columns=1,
legend cell align=left,
ymin=0.000001,
ymax=0.1,
width=3.25in,
height=2.1in,
grid=both,
xmin = 1.5,
xmax = 5,
xlabel = $E_b/N_0$ in dB,
ylabel = FER,
xtick={1,1.5,2,2.5,3,3.5,4,4.5,5}
]

\addplot[blue, mark = x]
table[x=snr,y=fer]{snr fer
1.00       0.083472  
1.50       0.054318    
2.00       0.018997    
2.50       0.008547    
3.00       0.002805    
3.50       0.000760    
4.00       0.000200   
4.50       0.000037
5.00       0.0000052
};\addlegendentry{\scriptsize{RM, ML}}

\addplot[red, mark = o, only marks]
table[x=snr,y=fer]{snr fer
1.00       0.0845959596    
1.50       0.0329081633    
2.00       0.0138020833    
2.50       0.00523316062    
3.00       0.0013486911    
3.50       0.000340512821    
4.00       0.0000898876404    
4.50       1.00597959e-5
5.00       1.30353673e-6
};\addlegendentry{\scriptsize{dRM, \ac{SCOS}}} 

\addplot[brown, line width = 1.25pt]
table[x=snr,y=fer]{snr fer
1.00       0.083750    
1.50       0.032250    
2.00       0.013250    
2.50       0.005050    
3.00       0.001288    
3.50       0.000332    
4.00       0.000080    
4.50       9.758002e-06
5.00       1.277466e-06
};\addlegendentry{\scriptsize{dRM, \ac{ML}}}

\addplot[brown, dashed, line width=1.25pt]
table[x=snr,y=fer]{snr fer
1.000000000000000   0.115886366213589
1.250000000000000   0.078306841497885
1.500000000000000   0.051426139481015
1.750000000000000   0.032269957928044
2.000000000000000   0.019472109272230
2.250000000000000   0.011188471710953
2.500000000000000   0.006113042960368
2.750000000000000   0.003173399363903
3.000000000000000   0.001565244822178
3.250000000000000   0.000734380896523
3.500000000000000   0.000325357361175
3.750000000000000   0.000138096889336
4.000000000000000   0.000056022452290
};
\addlegendentry{\scriptsize{RCU}}

\end{semilogyaxis}

	\begin{axis}[
width=3.25in,
height=2.1in,
		xmin = 1.5,
		xmax = 5,
  legend style={at={(1,0.62)},anchor=north east,cells={align=left}},
		ylabel = {$\mathbb{E}\left[\Lambda\right]$},
		xticklabels=none,
		axis y line*=right,
		axis x line*=none,
		ymin=1e0, ymax=2e0,
		ytick = {1,1.2,1.4,1.6,1.8,2}
		]

\addplot[red, dashed, mark = *, mark options = solid]
table[x=snr,y=fer]{snr fer
1.00     2.461620
1.50     1.926933
2.00     1.538337
2.50     1.275753
3.00     1.127301
3.50     1.052523
4.00     1.019462
4.50     1.006239
5.00     1.001712
};\addlegendentry{\scriptsize{dRM, ANV}} 

\end{axis}

\end{tikzpicture}
    \vspace{-5mm}
    \caption{$(64, 22)$}
    \label{fig:k22n64}
\end{subfigure}
\hfill
\begin{subfigure}{0.49\textwidth}
    \begin{tikzpicture}[scale=1]
\footnotesize
\begin{semilogyaxis}[
legend style={at={(1,1)},anchor=north east},
legend columns=1,
legend cell align=left,
ymin=0.000001,
ymax=0.1,
width=3.25in,
height=2.1in,
grid=both,
xmin = 2.5,
xmax = 5.5,
xlabel = $E_b/N_0$ in dB,
ylabel = FER,
xtick={1,1.5,2,2.5,3,3.5,4,4.5,5,5.5}
]

\addplot[blue, mark = x]
table[x=snr,y=fer]{snr fer
2.000000        0.124069
2.250000        0.074683
2.500000        0.051230
2.750000        0.033967
3.000000        0.016975
3.250000        0.009857
3.500000        0.005790
3.750000        0.003280
4.000000        0.001384
4.250000        0.000626
4.500000        0.000258
4.750000        0.000120
5.000000        0.000052
5.250000        0.000016
5.500000        0.00000627
5.750000        0.00000162
};\addlegendentry{\scriptsize{RM, ML}}
\addplot[red, mark = o, only marks]
table[x=snr,y=fer]{snr fer
1.00      0.31505102    
1.50      0.18877551    
2.00      0.0979591837    
2.50      0.0369897959   
3.00      0.0119525253    
3.50      0.00325520833    
4.00      0.000751041667    
4.50      0.00012371134    
5.00      1.7182134E-5
5.50      1.822290e-06
5.75      5.102457e-07
};\addlegendentry{\scriptsize{\ac{dRM}, \ac{SCOS}}} 

\addplot[brown, line width = 1.25pt]
table[x=snr,y=fer]{snr fer
1.00      0.308500    
1.25      0.233500    
1.50      0.185000    
1.75      0.132750    
2.00      0.096000    
2.25      0.054250    
2.50      0.036250    
2.75      0.022875    
3.00      0.011750    
3.25      0.005750    
3.50      0.003125    
3.75      0.001486    
4.00      0.000721    
4.25      0.000298    
4.50      0.000120    
4.75      4.355401e-05
5.00      1.666667e-05
5.25      5.299979e-06
5.50      1.822290e-06
5.75      5.102457e-07
};\addlegendentry{\scriptsize{\ac{dRM}, \ac{ML}}}

\addplot[brown, dashed, line width=1.25pt]
table[x=snr,y=fer]{snr fer
2.000000000000000   0.144639812349844
2.250000000000000   0.090642547980369
2.500000000000000   0.052949069328249
2.750000000000000   0.029791714731050
3.000000000000000   0.015903182473044
3.250000000000000   0.007915937963532
3.500000000000000   0.003682068501351
3.750000000000000   0.001644281764168
4.000000000000000   0.000698260209342
4.250000000000000   0.000280750003005
};
\addlegendentry{\scriptsize{RCU}}

\end{semilogyaxis}
	\begin{axis}[
width=3.25in,
height=2.1in,
xmin = 2.5,
xmax = 5.5,
  legend style={at={(1,0.62)},anchor=north east,cells={align=left}},
		ylabel = {$\mathbb{E}\left[\Lambda\right]$},
		xticklabels=none,
		axis y line*=right,
		axis x line*=none,
		ymin=1e0, ymax=2e0,
		ytick = {1,1.2,1.4,1.6,1.8,2}
		]

\addplot[red, mark = *, dashed]
table[x=snr,y=fer]{snr fer
1.00      3.468700
1.50      2.621128
2.00      1.970037
2.50      1.535428
3.00      1.230723
3.50      1.087558
4.00      1.030826
4.50      1.008787
5.00      1.002071
5.50      1.000413
5.75      1.000165
};\addlegendentry{\scriptsize{dRM, ANV}}

\end{axis}

\end{tikzpicture}
    \vspace{-5mm}
    \caption{$(64, 42)$}
    \label{fig:k42n64}
\end{subfigure}
\begin{subfigure}{0.49\textwidth}
    \begin{tikzpicture}[scale=1]
\footnotesize

\begin{semilogyaxis}[
legend style={at={(1,1)},anchor=north east},
legend cell align=left,
ymin=0.000001,
ymax=0.1,
width=3.25in,
height=2.1in,
grid=both,
xmin = 1,
xmax = 4,
xlabel = $E_b/N_0$ in dB,
ylabel = FER,
xtick={1,1.5,2,2.5,3,3.5,4,4.5,5,5.5}
]

\addplot[blue, mark = x]
table[x=snr,y=fer]{snr fer
0.000000        0.219298
0.250000        0.149477
0.500000        0.096432
0.750000        0.068353
1.000000        0.043234
1.250000        0.034566
1.500000        0.019755
1.750000        0.010401
2.000000        0.005155
2.250000        0.002992
2.500000        0.001832
2.750000        0.000853
3.000000        0.000320
3.250000        0.000150
3.500000        0.000058
3.750000        0.000022
4.000000        0.00000825
};\addlegendentry{\scriptsize{RM, ML}}

\addplot[red, mark = o, only marks]
table[x=snr,y=fer]{snr fer
1.00      0.0301630435    
1.50      0.0110526316     
2.00      0.00283163265     
2.50      0.000546     
3.00      0.000088    
3.50      9.28995811e-6
4.00      9.868551e-07
};\addlegendentry{\scriptsize{dRM, \ac{SCOS}}} 

\addplot[brown, line width = 1.25pt]
table[x=snr,y=fer]{snr fer
1.00      0.027750    
1.25      0.016375    
1.50      0.010500    
1.75      0.005150    
2.00      0.002775    
2.50      0.000546     
3.00      0.000088    
3.50      9.10415895e-6
3.75      3.006253e-06
4.00      9.868551e-07
};\addlegendentry{\scriptsize{dRM, \ac{ML}}} 

\addplot[brown, dashed, line width=1.25pt]
table[x=snr,y=fer]{snr fer
1.000000000000000   0.039152844678578
1.500000000000000   0.012228701907241
2.000000000000000   0.002977879167160
2.500000000000000   0.000533596547898
3.000000000000000   0.000069597916723
3.500000000000000   0.000006456518156
4.000000000000000   0.000000420982928
};\addlegendentry{\scriptsize{RCU}}
\end{semilogyaxis}

	\begin{axis}[
width=3.25in,
height=2.1in,
xmin = 1,
xmax = 4,
		ylabel = {$\mathbb{E}\left[\Lambda\right]$},
  legend style={at={(1,0.62)},anchor=north east,cells={align=left}},
		xticklabels=none,
		axis y line*=right,
		axis x line*=none,
		ymin=0, ymax=2e1,
		ytick = {1,4,8,12,16,20}
		]

\addplot[red, mark = *, dashed]
table[x=snr,y=fer]{snr fer
1.00      17.923721
1.50      9.555017
2.00      5.298481
2.50      3.025463 
3.00      1.852629 
3.50      1.342139
4.00      1.130070
};\addlegendentry{\scriptsize{dRM, ANV}}

\end{axis}

\end{tikzpicture}
    \vspace{-5mm}
    \caption{$(128, 29)$}
    \label{fig:k29n128}
\end{subfigure}
\hfill
\begin{subfigure}{0.49\textwidth}
    \begin{tikzpicture}[scale=1]
\footnotesize
\begin{semilogyaxis}[
legend style={at={(1,1)},anchor=north east},
legend columns=1,
legend cell align=left,
ymin=0.000001,
ymax=0.1,
width=3.25in,
height=2.1in,
grid=both,
xmin = 3,
xmax = 5.5,
xlabel = $E_b/N_0$ in dB,
ylabel = FER,
xtick={1,1.5,2,2.5,3,3.5,4,4.5,5,5.5}
]
\addplot[blue, mark = x]
table[x=snr,y=fer]{snr fer
1.000000        0.724638  
1.250000        0.609756  
1.500000        0.505051  
1.750000        0.418410  
2.000000        0.284091  
2.250000        0.210084  
2.500000        0.143266  
2.750000        0.071480  
3.000000        0.055804  
3.250000        0.026110  
3.500000        0.012243  
3.750000        0.005621  
4.000000        0.002578  
4.250000        0.000947  
4.500000        0.000377  
4.750000        0.000127  
5.000000        0.000037  
5.250000        0.0000116 
5.500000        0.00000293
};\addlegendentry{\scriptsize{RM, ML}}

\addplot[red, mark = o, only marks]
table[x=snr,y=fer]{snr fer 
2.25      0.150250     
2.50      0.096250     
3.00      0.029250     
3.50      0.005700    
4.00      0.000694    
4.50  	  6.561680e-05
5.00  	  4.162504e-06
5.50  	  3.104857e-07
};\addlegendentry{\scriptsize{dRM, \ac{SCOS}}} 

\addplot[brown, line width = 1.25pt]
table[x=snr,y=fer]{snr fer
2.25      0.150250     
2.50      0.096250     
3.00      0.029250     
3.50      0.005700    
4.00      0.000694    
4.50  	  6.561680e-05
5.00  	  4.162504e-06
5.50  	  3.104857e-07
};\addlegendentry{\scriptsize{dRM, \ac{ML}}}

\addplot[brown, dashed, line width=1.25pt]
table[x=snr,y=fer]{snr fer
2.500000000000000   0.152477613892065
2.750000000000000   0.078743804410255
3.000000000000000   0.037115035717574
3.250000000000000   0.015993086140909
3.500000000000000   0.005991806919906
3.750000000000000   0.002062012532753
4.000000000000000   0.000620769214744
4.250000000000000   0.000176584839303
4.500000000000000   0.000042820344892
4.750000000000000   0.000010636837909
};
\addlegendentry{\scriptsize{RCU}}

\end{semilogyaxis}
	\begin{axis}[
width=3.25in,
height=2.1in,
xmin = 3,
xmax = 5.5,
legend style={at={(1,0.62)},anchor=north east,cells={align=left}},
		ylabel = {$\mathbb{E}\left[\Lambda\right]$},
		xticklabels=none,
		axis y line*=right,
		axis x line*=none,
		ymin=1, ymax=4e0,
		ytick = {1,1.6,2.2,2.8,3.4,4}
		]

\addplot[red, mark = *, dashed]
table[x=snr,y=fer]{snr fer
2.50      8.082700  
3.00      3.569640  
3.50      1.745962 
4.00      1.190449 
4.50  	  1.042782
5.00  	  1.008514
5.50  	  1.001391
};\addlegendentry{\scriptsize{dRM, ANV}}

\end{axis}
\end{tikzpicture}
    \vspace{-5mm}
    \caption{$(128, 99)$}
    \label{fig:k99n128}
\end{subfigure}
\caption{\edd{FER/\ac{ANV} vs. $E_b/N_0$ over the \ac{biAWGN} channel for short \ac{dRM} codes under \ac{SCOS} decoding with $\lambda_\text{max}\eddd{\,=\eta }=10$ and $\lambda_\text{max}\eddd{\,=\eta}=100$ for $N=64$ and $N=128$, respectively, compared to relative RM codes under ML decoding and RCU bounds.}}
\label{fig:scos_dfRMshort}
\end{figure*}
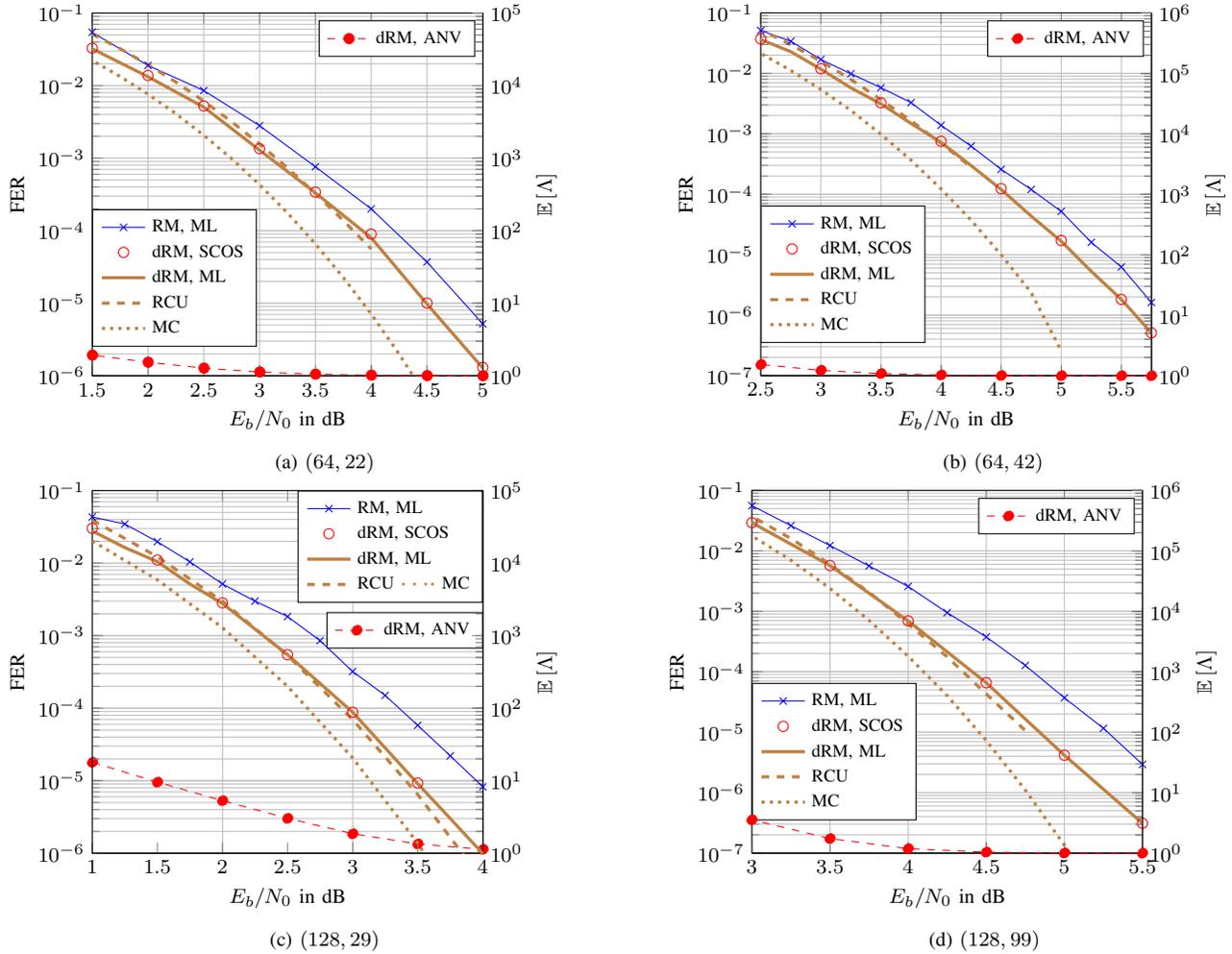
\begin{figure*}
\centering
\begin{subfigure}{0.49\textwidth}
    \begin{tikzpicture}[scale=1]
\footnotesize
\begin{semilogyaxis}[
legend pos = south west,
legend columns=1,
legend cell align=left,
ymin=0.0000001,
ymax=0.1,
width=3.25in,
height=2.6in,
grid=both,
xmin = 0.5,
xmax = 3.25,
xlabel = $E_b/N_0$ in dB,
ylabel = FER,
xtick={0,0.5,1,1.5,2,2.5,3,3.5,4,4.5,5,5.5}
]

\addplot[blue, mark = x]
table[x=snr,y=fer]{snr fer
0.000000        0.105042        0.105042
0.250000        0.082919        0.082919
0.500000        0.053362        0.053362
0.750000        0.028450        0.028450
1.000000        0.021026        0.021026
1.250000        0.008713        0.008713
1.500000        0.005912        0.005912
1.750000        0.002142        0.002142
2.000000        0.001168        0.001168
2.250000        0.000518        0.000518
2.500000        0.000193        0.000193
2.750000        0.000063        0.000063
3.000000        0.000026        0.000026
3.250000        0.000011        0.000011
};\addlegendentry{\scriptsize{RM, ML}}

\addplot[red, mark = o, only marks]
table[x=snr,y=fer]{snr fer
0.000000        0.090827
0.250000        0.053821
0.500000        0.035361
0.750000        0.020076
1.00    0.009446  
1.25    0.004708  
1.50    0.002273  
1.75    0.000929  
2.00    0.000354  
2.25    0.000134  
2.50    0.00004288
2.75    0.00001063
3.00    0.00000241
3.25    0.00000055
};\addlegendentry{\scriptsize{dRM, SCOS}}

\addplot[brown, line width = 1.25pt]
table[x=snr,y=fer]{snr fer
0.000000        0.090827
0.250000        0.053821
0.500000        0.035361
0.750000        0.020076
1.00    0.009446  
1.25    0.004667  
1.50    0.002227  
1.75    0.000875  
2.00    0.000337  
2.25    0.000131  
2.50    0.00004288
2.75    0.00001042
3.00    0.00000236
3.25    0.00000055
};\addlegendentry{\scriptsize{dRM, \ac{ML}}}

\addplot[brown, dashed, line width=1.25pt]
table[x=snr,y=fer]{snr fer
0.000000000000000   0.119998128089523
0.250000000000000   0.074639000838171
0.500000000000000   0.043916599173199
0.750000000000000   0.024031092477856
1.000000000000000   0.012319325413008
1.500000000000000   0.002540258823345
2.000000000000000   0.000363271361524
2.500000000000000   0.000034162681044
3.000000000000000   0.000002007804938
};
\addlegendentry{\scriptsize{RCU}}

\addplot[brown, dotted, line width=1.25pt]
table[x=snr,y=fer]{snr fer
0.000000000000000   0.072500305774611
0.250000000000000   0.044786249195379
0.500000000000000   0.025905885946289
0.750000000000000   0.013892880023224
1.000000000000000   0.006787981604876
1.250000000000000   0.003150247958208
1.500000000000000   0.001318142775469 
1.750000000000000   0.000492053012025
2.000000000000000   0.000167798618724
2.500000000000000   0.000013423720919
3.000000000000000   0.000000621987691
3.500000000000000   0.000000016051681
};\addlegendentry{\scriptsize{MC}}

\end{semilogyaxis}
	\begin{semilogyaxis}[
width=3.25in,
height=2.6in,
xmin = 0.5,
xmax = 3.25,
		ylabel = {$\mathbb{E}\left[\Lambda\right]$},
		xticklabels=none,
		axis y line*=right,
		axis x line*=none,
		ymin=1e0, ymax=1e3,
		]

\addplot[red, mark = *, dashed]
table[x=snr,y=fer]{snr fer
0.000000       829.421038
0.250000    743.953567
0.500000    724.071748
0.750000    676.761585
1.00    622.425858
1.25    587.860770
1.50    550.256064
1.75    476.064335
2.00    339.352031
2.25    185.839074
2.50    80.409015
2.75    34.919000
3.00    18.724190
3.25    11.570512
};\addlegendentry{\scriptsize{dRM, ANV}}

\end{semilogyaxis}
\end{tikzpicture}
    \vspace{-5mm}
    \caption{$(256, 37)$ dRM code.}
\end{subfigure}
\hfill
\begin{subfigure}{0.49\textwidth}
    \begin{tikzpicture}[scale=1]
\footnotesize
\begin{semilogyaxis}[
legend pos = south west,
legend cell align=left,
ymin=0.0000001,
ymax=0.1,
width=3.25in,
height=2.6in,
grid=both,
xmin = 1.5,
xmax = 3.5,
xlabel = $E_b/N_0$ in dB,
ylabel = FER,
xtick={1,1.5,2,2.5,3,3.5,4,4.5,5,5.5}
]

\addplot[blue, mark = x]
table[x=snr,y=fer]{snr fer
1.000000        0.279202
1.500000        0.080386
2.000000        0.012806
2.500000        0.001983
3.000000        0.000168
3.500000        0.000016
};\addlegendentry{\scriptsize{RM-p., ML}}

\addplot[red, mark = o, only marks]
table[x=snr,y=fer]{snr fer
1.000000        0.250627     
1.250000        0.136799     
1.500000        0.047642     
1.750000        0.022085     
2.000000        0.007084     
2.250000        0.001787     
2.500000        0.000552     
2.750000        0.000093     
3.000000        0.000020     
3.250000        0.00000408   
3.500000        0.00000056889
};\addlegendentry{\scriptsize{dRM-p., \ac{SCOS}}}

\addplot[brown, line width = 1.25pt]
table[x=snr,y=fer]{snr fer
1.00   0.210526
1.25   0.127223
1.50   0.045260
1.75   0.020981
2.00   0.006591
2.25   0.001655
2.50   0.000381
2.75   0.000070
3.00   0.00001338
3.25   0.00000355
3.50   0.000000534759
};\addlegendentry{\scriptsize{dRM-p., \ac{ML}}}

\addplot[brown, dashed, line width=1.25pt]
table[x=snr,y=fer]{snr fer
1.000000   0.281811597217360
1.500000   0.059757617047152
1.750000   0.021787891578998
2.000000   0.006164860529908
2.250000   0.001447497327921
2.500000   0.000267034704852
2.750000   0.000038281156687
3.000000   0.000004237053290
3.250000   0.000000361836078
3.500000   0.000000022242626
};\addlegendentry{\scriptsize{RCU} \textcolor{brown}{$\boldsymbol{\cdot\cdot\cdot\cdot}$} \scriptsize{MC}}

\addplot[brown, dotted, line width=1.25pt]
table[x=snr,y=fer]{snr fer
1.000000000000000   0.189108162000019
1.250000000000000   0.091949523847831
1.500000000000000   0.038472876420095
1.750000000000000   0.013630064420632
2.000000000000000   0.003764763316368
2.250000000000000   0.000784435440668
2.500000000000000   0.000146821299603
2.750000000000000   0.000017424951529
3.000000000000000   0.000001717851313
3.250000000000000   0.000000114536883
3.500000000000000   0.000000006313519
};

\end{semilogyaxis}
	\begin{semilogyaxis}[
width=3.25in,
height=2.6in,
xmin = 1.5,
xmax = 3.5,
		ylabel = {$\mathbb{E}\left[\Lambda\right]$},
		xticklabels=none,
		axis y line*=right,
		axis x line*=none,
		ymin=1e0, ymax=1e3,
		]

\addplot[red, mark = *, dashed]
table[x=snr,y=fer]{snr fer
1.250000        1507.919529
1.500000        799.158483 
1.750000        415.503528 
2.000000        196.295577 
2.250000        71.604317  
2.500000        24.860250  
2.750000        7.850612   
3.000000        2.842904   
3.250000        1.477260   
3.500000        1.128762
};\addlegendentry{\scriptsize{dRM-p., ANV}}

\end{semilogyaxis}

\end{tikzpicture}
    \vspace{-5mm}
    \caption{$(256, 154)$ dRMpolar code.}
\end{subfigure}
\begin{subfigure}{0.49\textwidth}
    \begin{tikzpicture}[scale=1]
\footnotesize
	\begin{semilogyaxis}[
width=3.25in,
height=2.6in,
		grid=both,
xmin = 3.5,
xmax = 5.5,
        legend style={at={(0.818,0.65)},anchor=north,cells={align=left}},
		ylabel = {$\mathbb{E}\left[\Lambda\right]$},
		xticklabels=none,
		axis y line*=right,
		axis x line*=none,
		ymin=1e0, ymax=1e3,
		]

\addplot[red, mark = *, dashed]
table[x=snr,y=fer]{snr fer
3.00    115.279784
3.25    61.297753
3.50    25.046607
3.75    11.075676
4.00    4.755738
4.25    2.337769
4.50 	1.476555
4.75 	1.174419  
5.00 	1.066131 
5.25 	1.025320
5.50    1.015320
};\addlegendentry{\scriptsize{dRM, ANV}}
\end{semilogyaxis}
\begin{semilogyaxis}[
legend columns=1,
legend cell align=left,
ymin=0.0000001,
ymax=0.1,
width=3.25in,
height=2.6in,
xmin = 3.5,
xmax = 5.5,
xlabel = $E_b/N_0$ in dB,
ylabel = FER,
xtick={1,1.5,2,2.5,3,3.5,4,4.5,5,5.5}
]

\addplot[blue, mark = x]
table[x=snr,y=fer]{snr fer
3.000000        0.194175
3.250000        0.089686
3.500000        0.040568
3.750000        0.018399
4.000000        0.006652
4.250000        0.002586
4.500000        0.000876
4.750000        0.000243
5.000000        0.000071
5.250000        0.000021
5.500000        0.000005
};\addlegendentry{\scriptsize{RM, ML}}

\addplot[red, mark = o, only marks]
table[x=snr,y=fer]{snr fer 
 3.000000        0.129702
 3.250000        0.065274
 3.500000        0.028043
 3.750000        0.007211
 4.000000        0.002282
 4.250000        0.000693
4.500000        0.000160
4.750000        0.000028
5.000000        0.00000456
5.250000        0.000000897
5.500000        0.00000023
};\addlegendentry{\scriptsize{dRM, \ac{SCOS}}} 

\addplot[brown, line width = 1.25pt]
table[x=snr,y=fer]{snr fer
3.000000        0.129702
 3.250000        0.065274
 3.500000        0.028043
 3.750000        0.007211
 4.000000        0.002282
 4.250000        0.000693
4.500000        0.000160
4.750000        0.000028
5.000000        0.00000456
5.250000        0.000000897
5.500000        0.00000023
};\addlegendentry{\scriptsize{dRM, \ac{ML}}}

\addplot[brown, dashed, line width=1.25pt]
table[x=snr,y=fer]{snr fer
3.000000        0.211070036594603
3.250000        0.094692211832993
3.500000        0.036898587538416
3.750000        0.011783037027200
4.000000        0.003050180352426
4.250000        0.000637917301447
4.500000        0.000118634699353
4.750000        0.000016996107601
5.000000        0.000002149996985
5.200000        4.067531259731501e-07
};
\addlegendentry{\scriptsize{RCU}}

\end{semilogyaxis}

\end{tikzpicture}
    \vspace{-5mm}
    \caption{$(256, 219)$ dRM code.}
\end{subfigure}
\hfill
\begin{subfigure}{0.49\textwidth}
    \begin{tikzpicture}[scale=1]
\footnotesize
	\begin{semilogyaxis}[
		legend pos = north east,
width=3.25in,
height=2.6in,
xmin = 4.0,
xmax = 5.5,
		ylabel = {$\mathbb{E}\left[\Lambda\right]$},
        grid=both,
		xticklabels=none,
		axis y line*=right,
		axis x line*=none,
		ymin=1e0, ymax=1e3,
		]

\addplot[red, mark = *, dashed]
table[x=snr,y=fer]{snr fer
4.00 	375.127783
4.25 	126.524038 
4.50 	33.133494
4.75 	8.583325  
5.00 	2.593536  
5.25 	1.387465  
5.50 	1.116407
};\addlegendentry{\scriptsize{dRM, ANV}}
\end{semilogyaxis}
\begin{semilogyaxis}[
legend pos = south west,
legend columns=1,
legend cell align=left,
ymin=0.0000001,
ymax=0.1,
width=3.25in,
height=2.6in,
xmin = 4.0,
xmax = 5.5,
xlabel = $E_b/N_0$ in dB,
ylabel = FER,
xtick={1,1.5,2,2.5,3,3.5,4,4.5,5,5.5}
]

\addplot[blue, mark = x]
table[x=snr,y=fer]{snr fer
4.000000        0.052690
4.500000        0.006101
5.000000        0.000304
5.500000        0.000017
};\addlegendentry{\scriptsize{RM, ML}}

\addplot[red, mark = o, only marks]
table[x=snr,y=fer]{snr fer 
4.00 	0.035625   
4.25 	0.009956   
4.50 	0.001950  
4.75 	0.000325  
5.00 	4.47771486e-5
5.25 	4.50127008e-6
5.50 	4.42086358e-7
};\addlegendentry{\scriptsize{dRM, \ac{SCOS}}} 

\addplot[brown, line width = 1.25pt]
table[x=snr,y=fer]{snr fer
4.00 	0.033125   
4.25 	0.007639   
4.50 	0.001635  
4.75 	0.000242  
5.00 	0.00002305
5.25 	0.00000264
5.50 	0.00000033
};\addlegendentry{\scriptsize{dRM, \ac{ML}}}

\addplot[brown, dashed, line width=1.25pt]
table[x=snr,y=fer]{snr fer
4.000000000000000   0.045240299409712
4.250000000000000   0.011235850150427
4.500000000000000   0.001912491564970
4.750000000000000   0.000264390836575
5.000000000000000   0.000024939461958
5.250000000000000   0.000001424893091
5.500000000000000   0.000000088636677
};
\addlegendentry{\scriptsize{RCU}}

\end{semilogyaxis}

\end{tikzpicture}
    \vspace{-5mm}
    \caption{$(512, 466)$ dRM code.}
\end{subfigure}
\caption{\edd{FER/\ac{ANV} vs. $E_b/N_0$ over the \ac{biAWGN} channel for moderate-length dRM and dRM-polar codes under \ac{SCOS} decoding $\lambda_\text{max}\eddd{\,=\eta}=5000$ compared to relative RM and RM-polar codes and RCU bounds.}}
\label{fig:scos_dfRMmoderate}
\end{figure*}
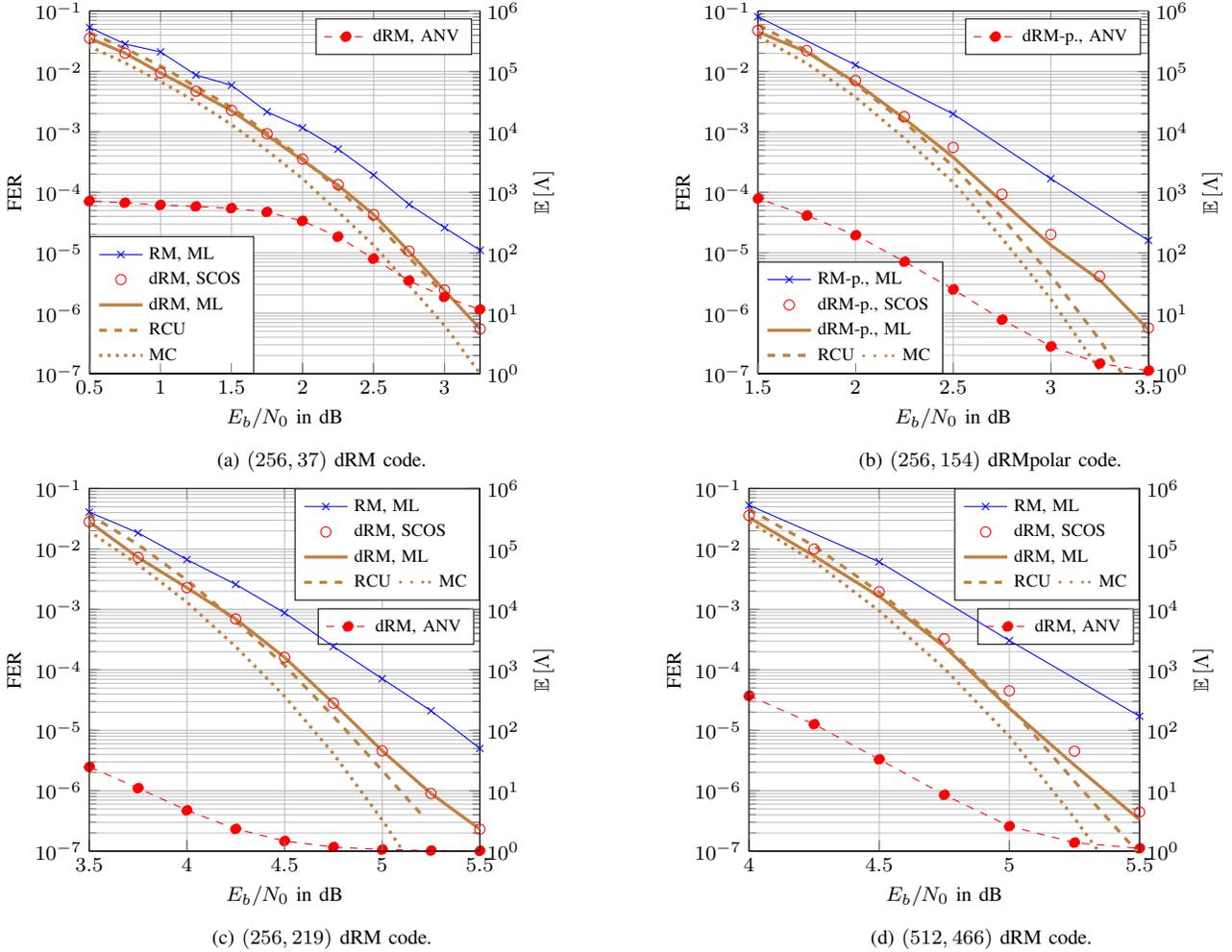

\edd{For all simulated codes, the performance approaches to that of \ac{ML} decoding (if not the same), which outperforms the \ac{ML} performance of \ac{RM} codes of the same parameters by up to $0.5$ dB, which is provided as reference. Note that \ac{SCOS} decoding parameters are set to the same complexity constraints for simulating \ac{RM} codes, where the \acp{FER} match the \ac{ML} lower bounds. Observe also that this remarkable performance is attained with $\mathbb{E}\left[\Lambda\right]\approx 1$ at high \ac{SNR} regime except for the case of $(256,37)$ codes, i.e., the proxy implies that the complexity will be close to that of \ac{SC} decoding at low \acp{FER} (e.g., $10^{-5}$ and below).}

Since the number of node-visits may not refer to the exact complexity, Tables \ref{tab:RM} and \ref{tab:dRM} provide the respective average complexity scores for \ac{SCOS}\eddd{, where $\lambda_{\text{max}}$ and $\eta$ are the same as in Figs. \ref{fig:scos_dfRMshort} and \ref{fig:scos_dfRMmoderate},} as well as \ac{SC} decoding of the codes as follows. During our simulations, we count the number of arithmetic operations, which include the floating-point additions and comparisons as well as binary XORs. In order to provide a unified complexity score, we assume that $1$ floating-point addition corresponds roughly to $8$ binary operations and $1$ floating-point comparison correspond to $6$ binary operations, i.e., an average complexity score is computed as
\begin{equation}
    8\cdot A + 6\cdot C + B\label{eq:complexityscore}
\end{equation}
where $A$, $B$ and $C$ are the average number of additions, XORs and comparisons during the simulations. These factors are motivated by the potential use of $8$-bit representation of real numbers in order to limit quantization errors and the fact that the comparison of two real numbers may be terminated before comparing all $8$ bits. Note that considered decoders do not require multiplications as the min-sum approximation is adopted. For instance, \ac{SC} decoding requires the same number of $\frac{1}{2}N\log_2 N$ additions, comparisons as well as XORs for any code of length $N$, resulting in a fixed complexity score as
\begin{equation}
    15\times\frac{1}{2}N\log_2 N \label{eq:sc_comp}
\end{equation}
if all frozen bits are set to $0$. Except for the case of $(256,37)$ codes, the average complexity score of \ac{SCOS} decoding is within a factor of $\edddd{1.5}$ from \ac{SC} decoding  for all \ac{RM} and \ac{dRM} codes at given \ac{SNR} values provided in the tables.
\setlength{\tabcolsep}{4pt}
\begin{table}
	\caption{\edd{RM codes under SCOS decoding}}
        \vspace{-3mm}
	\begin{center}
	\begin{tabular}{ccccccc}
		\hline\hline
		$(N,K)$ & $E_b/N_0$ & Addition & Compare & XOR & \eqref{eq:complexityscore} & \eddd{SC} \eqref{eq:sc_comp} \\
		\hline
		$(64,22)$ & $5.00$ & $300$ & $\edddd{192}$ & $193$ & $\edddd{3745}$ & $2880$ \\
		$(64,42)$ & $5.75$ & $339$ & $\edddd{192}$ & $193$ & $\edddd{4057}$ & $2880$ \\
		$(128,29)$ & $4.00$ & $724$ & $\edddd{504}$ & $507$ & $\edddd{9323}$ & $6720$ \\
		$(128,99)$ & $5.50$ & $774$ & $\edddd{448}$ & $450$ & $\edddd{9330}$ & $6720$\\
		$(256,37)$ & $3.25$ & $17943$ & $\edddd{12367}$ & $12204$ & $\edddd{229950}$ & $15360$\\
        $(256,219)$ & $5.50$ & $1733$ & $\edddd{1032}$ & $1034$ & $\edddd{21090}$ & $15360$ \\
        $(512,466)$ & $5.50$ & $4121$ & $\edddd{2515}$ & $2521$ & $\edddd{50579}$ & $34560$ \\
		\hline\hline
	\end{tabular}
\end{center}
	\label{tab:RM}
\end{table}
\begin{table}
	\caption{\edd{dRM codes under SCOS decoding}}
        \vspace{-3mm}
	\begin{center}
	\begin{tabular}{ccccccc}
		\hline\hline
		$(N,K)$ & $E_b/N_0$ & Addition & Compare & XOR & \eqref{eq:complexityscore} & SC \eqref{eq:complexityscore} \\
		\hline
		$(64,22)$ & $5.00$ & $300$ & $\edddd{192}$ & $269$ & $\edddd{3821}$ & $2948$ \\
		$(64,42)$ & $5.50$ & $339$ & $\edddd{192}$ & $266$ & $\edddd{4130}$ & $2954$ \\
		$(128,29)$ & $4.00$ & $724$ & $\edddd{505}$ & $763$ & $\edddd{9585}$ & $6956$ \\
		$(128,99)$ & $5.50$ & $774$ & $\edddd{449}$ & $708$ & $\edddd{9594}$ & $6945$ \\
		$(256,37)$ & $3.25$ & $18309$ & $\edddd{12793}$ & $19418$ & $\edddd{242648}$ & $16036$ \\
        $(256,219)$ & $5.50$ & $1733$ & $\edddd{1033}$ & $1697$ & $\edddd{21759}$ & $16027$ \\
        $(512,466)$ & $5.50$ & $4155$ & $\edddd{2562}$ & $4553$ & $\edddd{53165}$ & $36338$ \\
		\hline\hline
	\end{tabular}
\end{center}
	\label{tab:dRM}
\end{table}

\section{Further Improvements}
\label{sec:improvements}
\edd{An interesting modification to \ac{SCOS} is proposed by \cite{hashemi2021tree}, which avoids using \ac{PM} or any reliability score for the search as follows.} After finishing an instance of the initial \ac{SC} decoding, the flipping sets are prioritized according to a predefined depth-first or breadth-first search order. Numerical results show that the latter required less number of \ac{SC} decoding attempts for various \ac{RM} codes of length up to $512$ bits for approaching their \ac{ML} decoding performance compared to \ac{SCOS} decoding although each attempt is expected to require more node-visits. \edd{Nevertheless, the results imply that SCOS provides robust performance with various search schedules. In the following, we investigate the effect of search schedule via simulations when the bit reliability under \ac{SC} decoding is ignored, i.e., when only \acp{PM} are used for the search.}
\subsection{Bias Term Robustness}
\label{sec:bias}
Consider the bias terms $b_i$ given in \eqref{eq:score_bias_term}, \mcc{which} impacts the search priority \mcc{but not the performance \edd{if the maximum complexity constraints are unbounded}. This means that} a \emph{suboptimal} bias term does not change the performance of \ac{SCOS} decoding with unbounded complexity \mcc{(which is still \ac{ML} \edd{decoding})}, but \eddd{it may increase the} complexity. 

\eddd{In order to compute $b_i$, we assume that the all-zero codeword is transmitted thanks to the channel symmetry and the linearity of the codes under consideration. Let $f^{(i)}_{N}$ denote the \ac{PDF} of the \ac{RV} corresponding to $\ell_i(\boldsymbol{0})$, where $\boldsymbol{0}$ denotes an all-zero vector of length $i$ and the source of randomness is the channel output $Y^N$. Over general \acp{B-DMC}, the densities can be computed recursively as
\begin{align}
	f^{(2i-1)}_{N} = f^{(i)}_{\nicefrac{N}{2}} \boxast f^{(i)}_{\nicefrac{N}{2}}\label{eq:de_awgn1} \\
    f^{(2i)}_{N} = f^{(i)}_{\nicefrac{N}{2}} \varoast f^{(i)}_{\nicefrac{N}{2}}
	\label{eq:de_awgn2}
\end{align}
where $f^{(1)}_{1}$ is the \ac{PDF} of the \ac{i.i.d.} \acp{LLR} at the channel output, and $\boxast$ and $\varoast$ denote the check and variable node convolutions, respectively, as defined in \cite[Ch. 4]{Richardson:2008:MCT:1795974}. Then, terms $p_j$ in Eq. \eqref{eq:bias_term} can be computed via $f^{(i)}_{N}$ as
\begin{equation}
    p_j=\lim\limits_{z\rightarrow 0} \left(\int_{-\infty}^{-z}f^{(j)}_{N}(x)dx+\frac{1}{2}\int_{-z}^{+z}f^{(j)}_{N}(x)dx\right).\label{eq:de_awgn3}
\end{equation}
The computation of \eqref{eq:de_awgn1}, \eqref{eq:de_awgn2} and \eqref{eq:de_awgn3} can be carried out, for instance, via quantized density evolution \cite{chung_urbanke2001}, yielding an accurate estimate of the \ac{RHS} of \eqref{eq:de_awgn3}.}

Figure~\ref{fig:scos_dfpolar_robust} illustrates the effect of various bias terms outlined below on the performance of \ac{SCOS} decoding \edd{with bounded and unbounded $\lambda_\text{max}$.}
\begin{itemize}
	\item The bias terms are computed via the \ac{RHS} of~\eqref{eq:bias_term} using \eddd{quantized} density evolution for each \ac{SNR} point.
	\item \mcc{The bias terms are set to zero, i.e., $b_i=0$, $i\in[N]$ \eddd{which results in using the \ac{PM} as score as well}.}
\end{itemize}
The complexity reduction is limited if \eqref{eq:bias_term} is used instead of setting the bias terms to zero. Nevertheless, setting them to zero slightly degrades the performance (by $\approx0.12$~dB) when the maximum \edd{number of node-visits} is constraint to five times that of \ac{SC} decoding with almost no savings in the average complexity. Hence, we conclude that \ac{SCOS} decoding is not very sensitive to the choice of bias terms \edd{unless the maximum complexity is required to be very low}.
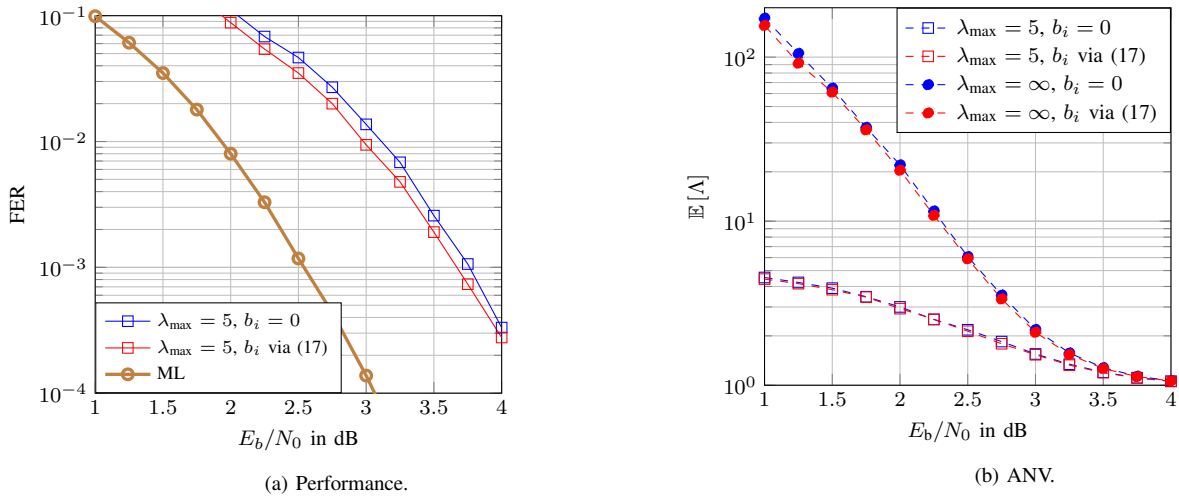
\begin{figure*}
\centering
\begin{subfigure}{0.49\textwidth}
    \begin{tikzpicture}[scale=1]
\footnotesize
\begin{semilogyaxis}[
legend pos=south west,
legend columns=1,
legend cell align=left,
ymin=0.0001,
ymax=0.1,
width=3.5in,
height=2in,
grid=both,
xmin = 1.0,
xmax = 4.0,
xlabel = $E_b/N_0$ in dB,
ylabel = FER,
xtick={1,1.5,2,2.5,3,3.5,4,4.5,5,5.5}
]

\addplot[blue, mark = square]
table[x=snr,y=fer]{snr fer 
1 0.419
1.25 0.3275
1.5 0.249
1.75 0.1725
2 0.111
2.25 0.06825
2.5 0.0465
2.75 0.027
3 0.013688
3.25 0.0068333
3.5 0.0025769
3.75 0.0010628
4 0.00033223
};\addlegendentry{$b_i=0$} 

\addplot[red, mark = square]
table[x=snr,y=fer]{snr fer 
1 0.412
1.25 0.3305
1.5 0.229
1.75 0.144
2 0.088
2.25 0.05425
2.5 0.035
2.75 0.02
3 0.0094091
3.25 0.0047778
3.5 0.0019057
3.75 0.00073529
4 0.00027701
};\addlegendentry{$b_i$ via \eqref{eq:bias_term}} 

\addplot[brown, mark=o, line width = 1.25pt]
table[x=snr,y=fer]{snr fer
	0 0.4405
	0.25 0.3175
	0.5 0.246
	0.75 0.1655
	1 0.099000
	1.25 0.061000
	1.5 0.035000
	1.75 0.017933
	2 0.008000
	2.25 0.003292
	2.5 0.001176
	2.75 0.00043668
	3 0.000138
	3.25 3.8432e-05
	3.5 1.0009e-05
	3.75 3.0e-06
	4 7.9e-07
};\addlegendentry{\ac{ML}}

\end{semilogyaxis}

\end{tikzpicture}
    \vspace{-2mm}
    \caption{Performance \eddd{with $\lambda_\text{max}=5$}.}
\end{subfigure}
\vspace{2mm}
\begin{subfigure}{0.49\textwidth}
    \begin{tikzpicture}[scale=1]
\footnotesize
\begin{semilogyaxis}[
width=3.5in,
height=2in,
legend cell align=left,
ymin=1,
ymax=200,
grid=both,
xmin = 1,
xmax = 4,
xlabel = $E_\text{b}/N_0\ \text{in dB}$,
ylabel = {$\mathbb{E}\left[\Lambda\right]$},
xtick={1,1.5,2,2.5,3,3.5,4}
]
\addplot[blue, mark = square, dashed, mark options = solid]
table[x=snr,y=fer]{snr fer
1 4.5375
1.25 4.2342
1.5 3.9061
1.75 3.4649
2 3.0024
2.25 2.516
2.5 2.1786
2.75 1.8461
3 1.5511
3.25 1.344
3.5 1.2011
3.75 1.1121
4 1.0578
};\addlegendentry{$\lambda_\text{max}=5$, $b_i=0$}
\addplot[red, mark = square, dashed, mark options = solid]
table[x=snr,y=fer]{snr fer
1 4.4367
1.25 4.1614
1.5 3.8236
1.75 3.4435
2 2.9312
2.25 2.5255
2.5 2.1347
2.75 1.7909
3 1.5346
3.25 1.3259
3.5 1.1927
3.75 1.1065
4 1.0555
};\addlegendentry{$\lambda_\text{max}=5$, $b_i$ via \eqref{eq:bias_term}}

\addplot[blue, mark = *, dashed] 
table[x=snr,y=fer]{snr fer
1.000000      172.79056
1.250000      105.35080
1.500000      64.691281
1.750000      37.307485
2.000000      22.029296
2.250000      11.541460
2.500000      6.082169 
2.750000      3.546714 
3.000000      2.186662 
3.250000      1.573250 
3.500000      1.279914 
3.750000      1.135098 
4.000000      1.064827 
};\addlegendentry{$\lambda_\text{max}=\infty$, $b_i=0$}

\addplot[red, mark = *, dashed]
table[x=snr,y=fer]{snr fer
	0 460.1
	0.25 372.04
	0.5 311.22
	0.75 211.98
	1 156.06
	1.25 91.848
	1.5 61.176
	1.75 36.022
	2 20.436
	2.25 10.851
	2.5 5.8956
	2.75 3.358
	3 2.103
	3.25 1.540764
	3.5 1.264679
	3.75 1.128713
	4 1.062140
};\addlegendentry{$\lambda_\text{max}=\infty$, $b_i$ via \eqref{eq:bias_term}}

\end{semilogyaxis}

\end{tikzpicture}
    \vspace{-2mm}
    \caption{ANV.}
\end{subfigure}
\caption{FER/\ac{ANV} vs. $E_b/N_0$ over the \ac{biAWGN} channel for \edd{the} $\left(128,64\right)$ \ac{PAC} code under \ac{SCOS} decoding with various bias terms and maximum complexity constraints \eddd{such that $\eta = \lambda_{\text{max}}$}.}
\label{fig:scos_dfpolar_robust}
\end{figure*}

\subsection{\ed{SC Ordered Search Decoding with Maximum Path Metric}}
\label{sec:SCOS_max_PM}
Monte Carlo simulation \edd{under} genie-aided \ac{SC} decoding~\cite{arikan2009channel} can be used to approximate the \ac{PDF} of the \ac{PM} for the transmitted message at a given \ac{SNR}. \ed{Note that $M\left(u^N\right)$ is a \ac{RV} where the source of randomness is the channel output. Since we consider symmetric \acp{B-DMC} and a linear code with uniform distribution, the \ac{PDF} of $M\left(u^N\right)$ could be computed with an all-zero codeword assumption.} \ed{For instance,} Figure~\ref{fig:dfpolar_PM} provides the \ac{PDF} for the $(128,64)$ \ac{PAC} code at $E_b/N_0 = 3.5$~dB. Observe that $\text{Pr}\left(M\left(u^N\right) > 50 \right) \approx 0$, i.e., if the decoder discards the paths having \acp{PM} larger than $50$\edd{,} then the performance degradation is negligible while reducing computational complexity. Such a modification is particularly relevant when a maximum complexity constraint is imposed on \ac{SCOS} decoding. In this case, unnecessary node-visits drain the computation budget and increase the number of suboptimal decisions. Moreover, the threshold test lets the decoder reject unreliable decisions and reduces the number of undetected errors if the threshold is carefully optimized, see\cite{forney1968exponential}.
\begin{figure}[t]
	\centering
	\footnotesize
	\begin{tikzpicture}[scale=1]
\begin{axis}[
ymin=0,
ymax=0.09,
width=3.5in,
height=1.5in,
xtick={0,5,...,50},
grid=both,
xmin = 0,
xmax = 50,
xlabel = $M\left(u^N\right)$,
ylabel = Empirical PDF
]

\addplot[red,thick,each nth point=1]
table[x=snr,y=fer]{snr fer
	4.5 3.937e-06
	5.008 1.7717e-05
	5.516 3.7402e-05
	6.024 5.315e-05
	6.532 0.0001752
	7.04 0.00037795
	7.548 0.0006752
	8.056 0.0012126
	8.564 0.0019173
	9.072 0.0029154
	9.58 0.0043917
	10.088 0.0061732
	10.596 0.0083346
	11.104 0.011596
	11.612 0.014433
	12.12 0.01811
	12.628 0.022766
	13.136 0.027742
	13.644 0.031711
	14.152 0.037126
	14.66 0.042079
	15.168 0.047419
	15.676 0.052772
	16.184 0.056998
	16.692 0.060994
	17.2 0.06436
	17.708 0.068354
	18.216 0.071431
	18.724 0.072797
	19.232 0.073665
	19.74 0.07428
	20.248 0.073201
	20.756 0.072776
	21.264 0.071417
	21.772 0.070028
	22.28 0.06737
	22.788 0.064431
	23.296 0.062002
	23.804 0.058547
	24.312 0.055217
	24.82 0.051171
	25.328 0.047766
	25.836 0.044037
	26.344 0.039906
	26.852 0.036752
	27.36 0.033661
	27.868 0.030234
	28.376 0.027197
	28.884 0.023583
	29.392 0.021573
	29.9 0.019181
	30.408 0.016947
	30.916 0.015
	31.424 0.012915
	31.932 0.011522
	32.44 0.0096673
	32.948 0.0084154
	33.456 0.0073701
	33.964 0.0064587
	34.472 0.0055394
	34.98 0.0046713
	35.488 0.004
	35.996 0.0034213
	36.504 0.0029744
	37.012 0.0022579
	37.52 0.0018799
	38.028 0.0016457
	38.536 0.001435
	39.044 0.0011732
	39.552 0.00096457
	40.06 0.00081693
	40.568 0.00062992
	41.076 0.00052559
	41.584 0.0004311
	42.092 0.00035236
	42.6 0.00032874
	43.108 0.00025787
	43.616 0.00019291
	44.124 0.0001811
	44.632 8.2677e-05
	45.14 8.4646e-05
	45.648 6.1024e-05
	46.156 7.874e-05
	46.664 3.937e-05
	47.172 5.1181e-05
	47.68 2.5591e-05
	48.188 2.9528e-05
	48.696 1.7717e-05
	49.204 1.1811e-05
	49.712 2.3622e-05
	50.22 1.5748e-05
	50.728 1.378e-05
	51.236 5.9055e-06
	51.744 9.8425e-06
	52.252 1.9685e-06
	52.76 1.9685e-06
	53.268 1.9685e-06
	53.776 0
	54.284 1.9685e-06
	54.792 1.9685e-06
	55.3 0
	
};\addlegendentry{no approx.}

\addplot[blue,thick,each nth point=2]
table[x=snr,y=fer]{snr fer
	0 0.001643
	0.437 0.002778
	0.874 0.0046659
	1.311 0.0071899
	1.748 0.010387
	2.185 0.014114
	2.622 0.018254
	3.059 0.023183
	3.496 0.028588
	3.933 0.0337
	4.37 0.039341
	4.807 0.04586
	5.244 0.050435
	5.681 0.056446
	6.118 0.061201
	6.555 0.066089
	6.992 0.069796
	7.429 0.073314
	7.866 0.076053
	8.303 0.078675
	8.74 0.079719
	9.177 0.08122
	9.614 0.080799
	10.051 0.081057
	10.488 0.080153
	10.925 0.077897
	11.362 0.076657
	11.799 0.072856
	12.236 0.070668
	12.673 0.068167
	13.11 0.064941
	13.547 0.061497
	13.984 0.05719
	14.421 0.054027
	14.858 0.050549
	15.295 0.047121
	15.732 0.043009
	16.169 0.039728
	16.606 0.036281
	17.043 0.033558
	17.48 0.03065
	17.917 0.02759
	18.354 0.024522
	18.791 0.022229
	19.228 0.020018
	19.665 0.017927
	20.102 0.015867
	20.539 0.014066
	20.976 0.01246
	21.413 0.011275
	21.85 0.0097963
	22.287 0.0084073
	22.724 0.0075629
	23.161 0.0065904
	23.598 0.0058124
	24.035 0.005
	24.472 0.0043753
	24.909 0.0037071
	25.346 0.0033593
	25.783 0.0027689
	26.22 0.0023822
	26.657 0.0020618
	27.094 0.0017551
	27.531 0.0015057
	27.968 0.0012952
	28.405 0.0010481
	28.842 0.00091533
	29.279 0.00073913
	29.716 0.00063844
	30.153 0.00051945
	30.59 0.00043707
	31.027 0.00035011
	31.464 0.00027918
	31.901 0.00024027
	32.338 0.00025858
	32.775 0.00016705
	33.212 0.00018993
	33.649 0.00012815
	34.086 0.00012357
	34.523 0.00010526
	34.96 8.9245e-05
	35.397 4.5767e-05
	35.834 5.2632e-05
	36.271 3.8902e-05
	36.708 2.5172e-05
	37.145 2.9748e-05
	37.582 1.373e-05
	38.019 1.373e-05
	38.456 2.746e-05
	38.893 9.1533e-06
	39.33 1.6018e-05
	39.767 1.373e-05
	40.204 2.2883e-06
	40.641 2.2883e-06
	41.078 1.1442e-05
	41.515 0
	41.952 2.2883e-06
	42.389 2.2883e-06
	42.826 2.2883e-06
	43.263 0
	43.7 0
	
};\addlegendentry{min-sum approx.}

\end{axis}
\end{tikzpicture}
	\caption{Empirical \ac{PDF} (via $10^7$~samples) of the $M\left(u^N\right)$ over the \ac{biAWGN} channel at $3.5$~dB \ac{SNR} for \edd{the} $\left(128,64\right)$ \ac{PAC} code. The \edd{both curves are} obtained via genie-aided \ac{SC} decoding~\cite{arikan2009channel}\edd{, where the blue line uses the} min-sum approximation \edd{and} the red line without any approximation.}
	\label{fig:dfpolar_PM}
\end{figure}
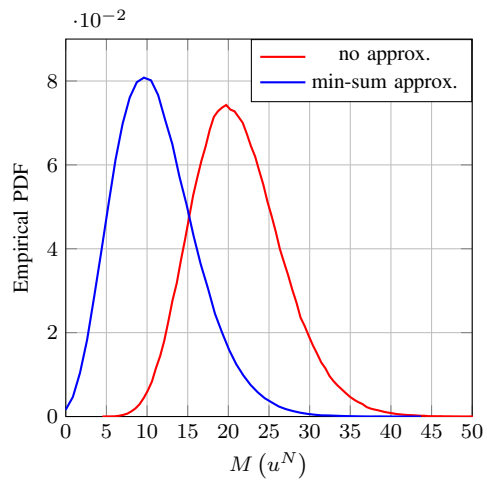
In the following, we modify \ac{SCOS} decoding by setting a maximum \ac{PM} $M_\text{max}$ as shown in Algorithm~\ref{alg:SCOSwPMmax}. 
\begin{algorithm}[t]
        \footnotesize
	\SetNoFillComment
	\DontPrintSemicolon
	\SetKwInOut{Input}{Input}\SetKwInOut{Output}{Output}
	\Input{input \ac{LLR}s $\ell^N$, \textcolor{red}{$M_\text{max}$}}
	\Output{output vector $\hat{\texttt{u}}$, \textcolor{red}{decoding state $\omega$}}
	\BlankLine
	$\mathcal{L}=\varnothing, \mathcal{E}_\texttt{p}=\varnothing, \texttt{M}_\texttt{cml}=\textcolor{red}{M_\text{max}}$, \textcolor{red}{$\omega=0$}\\
	\For{$i=1,2,\dots,N$}{
		$\texttt{L}\left[1,i\right]=\ell_i$
	}
	$\textcolor{red}{i_\text{end} =~} \text{SCDec}\left(1,\varnothing\right)$\\
	\textcolor{red}{
		\lIf{$i_\text{end} = N$}{$\omega=1$}
	}
	\For{$i=1,2,\dots,N$}{
		\If{$i\in\mathcal{A}$ and $\overline{\texttt{M}}\left[i\right]<\texttt{M}_\texttt{cml}$}{
			$\text{Insert\edddd{Heap}}\left(\left\langle\{i\},\overline{\texttt{M}}\left[i\right],\overline{\texttt{S}}\left[i\right]\right\rangle\right)$\\
		}
	}
	\While{$\mathcal{L}\neq\varnothing$}{
		$\left\langle \mathcal{E},\overline{\texttt{M}}_\mathcal{E},\overline{\texttt{S}}_\mathcal{E}\right\rangle= \text{pop\edddd{Min}}\left(\mathcal{L}\right)$\\
		\If{$\overline{\texttt{M}}_\mathcal{E} < \texttt{M}_\texttt{cml}$}{
			$i_\text{start}=\text{FindStartIndex}\left(\mathcal{E},\mathcal{E}_\texttt{p}\right)$\\
			$i_\text{end}=\text{SCDec}\left(i_\text{start},\mathcal{E}\right)$\\
			\textcolor{red}{
				\lIf{$i_\text{end} = N$}{$\omega=1$}
			}
			\For{$i=\text{maximum}\left(\mathcal{E}\right)+1,\dots,i_\text{end}$}{
				\If{$i\in\mathcal{A}$ and $\overline{\texttt{M}}\left[i\right]<\texttt{M}_\texttt{cml}$}{
					$\text{Insert\edddd{Heap}}\left(\left\langle \mathcal{E}\cup\{i\},\overline{\texttt{M}}\left[i\right],\overline{\texttt{S}}\left[i\right]\right\rangle\right)$\\
				}
			}
			$\mathcal{E}_\texttt{p} = \mathcal{E}$
		}
	}
	\Return{$\hat{\texttt{u}}$, \textcolor{red}{$\omega$}}
	\caption{$\text{SCOS \textcolor{red}{with maximum PM}}\left(\ell^N, \textcolor{red}{M_\text{max}}\right)$}
	\label{alg:SCOSwPMmax}
\end{algorithm}

The differences in \ac{SCOS} decoding with a maximum \ac{PM} are highlighted in red. The algorithm requires the input $M_\text{max}$ to discard candidates with \acp{PM} larger than $M_\text{max}$, where the \ac{PM} of the current most-likely path is initialized to this threshold (\texttt{line 1}). In addition, the algorithm has a binary output $\omega$, which is initialized to $0$ (\texttt{line 1}), is set to $1$ if any estimate with \ac{PM} less than $M_\text{max}$ is found (\texttt{lines 5 and 14}) and stays as $0$ otherwise. Note that the initial \ac{SC} decoding does not reach a leaf node in case SC decoding path is exceeding $M_\text{max}$ (\texttt{line 4}).

\edd{Figure~\ref{fig:scos_dfpolar_PMmax} compares the performance of \ed{SCOS decoding with maximum \ac{PM} constraint} (Algorithm \ref{alg:SCOSwPMmax}) to that of original \ac{SCOS} decoding. The former gains $\approx0.2$~dB with the same maximum complexity constraint \eddd{$\lambda_\text{max}=5$} if $M_\text{max}=35$. Note that the average complexity is similar}.
\begin{figure}
\centering
\begin{subfigure}{0.49\textwidth}
\centering
    \begin{tikzpicture}[scale=1]
\footnotesize
\begin{semilogyaxis}[
legend pos=south west,
legend columns=1,
legend cell align=left,
ymin=0.0001,
ymax=0.4,
width=3.25in,
height=2in,
grid=both,
xmin = 1.0,
xmax = 4.0,
xlabel = $E_b/N_0$ in dB,
ylabel = FER,
xtick={1,1.5,2,2.5,3,3.5,4,4.5,5,5.5}
]

\addplot[red, mark = o]
table[x=snr,y=fer]{snr fer
1 0.412
1.25 0.3305
1.5 0.229
1.75 0.144
2 0.088
2.25 0.05425
2.5 0.035
2.75 0.02
3 0.0094091
3.25 0.0047778
3.5 0.0019057
3.75 0.00073529
4 0.00027701
};\addlegendentry{$M_\text{max}=\infty$}


\addplot[blue, mark = square]
table[x=snr,y=fer]{snr fer
1 0.3355
1.25 0.266
1.5 0.19
1.75 0.128
2 0.08225
2.25 0.044833
2.5 0.026
2.75 0.014
3 0.0066875
3.25 0.0028571
3.5 0.00096154
3.75 0.00036679
4 0.0001073
};\addlegendentry{$M_\text{max}=35$}

\addplot[brown, mark = x]
table[x=snr,y=fer]{snr fer
1 0.3425
1.25 0.2565
1.5 0.196
1.75 0.142
2 0.08535
2.25 0.053167
2.5 0.035667
2.75 0.018417
3 0.0085833
3.25 0.0046818
3.5 0.001627
3.75 0.00055866
4 0.00017857
};\addlegendentry{$M_\text{max}=50$}

\addplot[black, mark = +]
table[x=snr,y=fer]{snr fer
1 0.319
1.25 0.252
1.5 0.176
1.75 0.1225
2 0.08675
2.25 0.05775
2.5 0.040333
2.75 0.026875
3 0.016
3.25 0.0092273
3.5 0.0068667
3.75 0.0035893
4 0.0018518
};\addlegendentry{$M_\text{max}=25$}

\end{semilogyaxis}

\end{tikzpicture}
    \vspace{-2mm}
    \caption{Performance.}
\end{subfigure}
\begin{subfigure}{0.49\textwidth}
\centering
\hspace{6pt}
    \begin{tikzpicture}[scale=1]
\footnotesize
\begin{axis}[
width=3.25in,
height=1.5in,
ymin=1,
ymax=5,
grid=both,
xmin = 1,
xmax = 4,
xlabel = $E_\text{b}/N_0\ \text{in dB}$,
ylabel = {$\mathbb{E}\left[\Lambda\right]$},
xtick={1,1.5,2,2.5,3,3.5,4}
]
\addplot[black, mark = +, dashed, mark options = solid]
table[x=snr,y=fer]{snr fer
1 4.4516
1.25 4.1554
1.5 3.7927
1.75 3.346
2 2.8746
2.25 2.4682
2.5 2.1414
2.75 1.7861
3 1.5044
3.25 1.2962
3.5 1.1744
3.75 1.0963
4 1.0506
};\addlegendentry{$M_\text{max}=25$}

\addplot[red, mark = o, dashed, mark options = solid]
table[x=snr,y=fer]{snr fer
1 4.5008
1.25 4.1815
1.5 3.8308
1.75 3.4375
2 3.0013
2.25 2.4949
2.5 2.161
2.75 1.7834
3 1.5367
3.25 1.3314
3.5 1.1937
3.75 1.1094
4 1.0561
};\addlegendentry{$M_\text{max}=\infty$}

\addplot[brown, mark = x, dashed, mark options = solid]
table[x=snr,y=fer]{snr fer
1 4.4463
1.25 4.1388
1.5 3.7876
1.75 3.4523
2 2.9205
2.25 2.513
2.5 2.1693
2.75 1.793
3 1.5231
3.25 1.3319
3.5 1.1898
3.75 1.1058
4 1.0552
};\addlegendentry{$M_\text{max}=50$}

\addplot[blue, mark = square, dashed, mark options = solid]
table[x=snr,y=fer]{snr fer
1 4.4742
1.25 4.1769
1.5 3.8038
1.75 3.3113
2 2.931
2.25 2.5124
2.5 2.1198
2.75 1.7781
3 1.5189
3.25 1.3131
3.5 1.1875
3.75 1.1037
4 1.0542
};\addlegendentry{$M_\text{max}=35$}

\end{axis}

\end{tikzpicture}
    \vspace{-2mm}
    \caption{ANV.}
\end{subfigure}
\caption{FER/\ac{ANV} vs. $E_b/N_0$ over the \ac{biAWGN} channel for \edd{the} $\left(128,64\right)$ \ac{PAC} code under \ac{SCOS} decoding with various maximum \acp{PM} \eddd{and fixed maximum complexity constraints $\lambda_\text{max}=\eta=5$}.}
\label{fig:scos_dfpolar_PMmax}
\end{figure}

\edd{Observe now that the proposed modification enables \ac{SCOS} decoding to reject an \emph{unreliable} estimate inherently.} For a given threshold $M_\text{max}$, define the binary \ac{RV}
\begin{equation}\label{eq:rv_undetected}
\Omega = \mathbbm{1}\left\{M(\hat{u}^N)\leq M_\text{max}\right\}
\end{equation}
where the indicator function $\mathbbm{1}\{\mathrm{P}\}$ takes on the value $1$ if the proposition $\mathrm{P}$ is true and $0$ otherwise. The proposition of the indicator function \eqref{eq:rv_undetected} reads as ``the modified \ac{SCOS} decoding finds an estimate $\hat{u}^N$ with a \ac{PM} smaller than $M_\text{max}$''. The undetected error probability of the algorithm is
\begin{align}\label{eq:undetected_error_prob}
\text{Pr}\left(\hat{U}^N\neq U^N,\Omega = 1\right).
\end{align}
The overall error probability is the sum of the detected and undetected error probabilities, i.e., we have
\begin{align}\label{eq:overall_error_prob}
\text{Pr}\left(\hat{U}^N\neq U^N\right) = \sum_{\omega\in\{0,1\}}\text{Pr}\left(\hat{U}^N \neq U^N, \Omega = \omega\right)
\end{align}
which follows from the law of total probability. The parameter $M_\text{max}$ controls the FER and \ac{uFER} tradeoff~\cite{forney1968exponential,hof2010performance}. In particular, \eqref{eq:undetected_error_prob} is the \ac{LHS} of \eqref{eq:overall_error_prob} if $M_\text{max} = \infty$. \edd{Numerical results illustrating benefits in \ac{uFER} is provided towards the end of the next section.}

\section{Comparison to Existing Polar Decoders}
\label{sec:numerical}
This section compares the proposed \ac{SCOS} decoding to \ac{SC-Fano}, \ac{SCL}\edddd{, \ac{SCS} as well as \ac{ORBGRAND}\cite{DAM22:ORBGRAND}} algorithms in the short block length regime, e.g., for $N=128$, for two different code dimensions $K=29$ and $K=99$, over the \ac{biAWGN} channels. \eddd{Note that the score is given by \eqref{eq:Score2} for \ac{SCOS}, \ac{SC-Fano}, and \ac{SCS} decoding algorithms, where we use min-sum approximation for the first term and quantized density evolution for the bias as explained in Section \ref{sec:bias}.} We provide \acp{FER} and average complexity scores computed via \eqref{eq:complexityscore} for these codes.

\edd{Figures \ref{fig:compscos_dfRMcodes}(a) and \ref{fig:compscos_dfRMcodes}(b) provides the results for the $(128,29)$ code, where \ac{SCOS} with maximum \eddd{complexity constraints} $\lambda_{\mathrm{max}}\eddd{\,=\eta}=64$ approaches to its \ac{ML} performance outperforming other decoding algorithms.
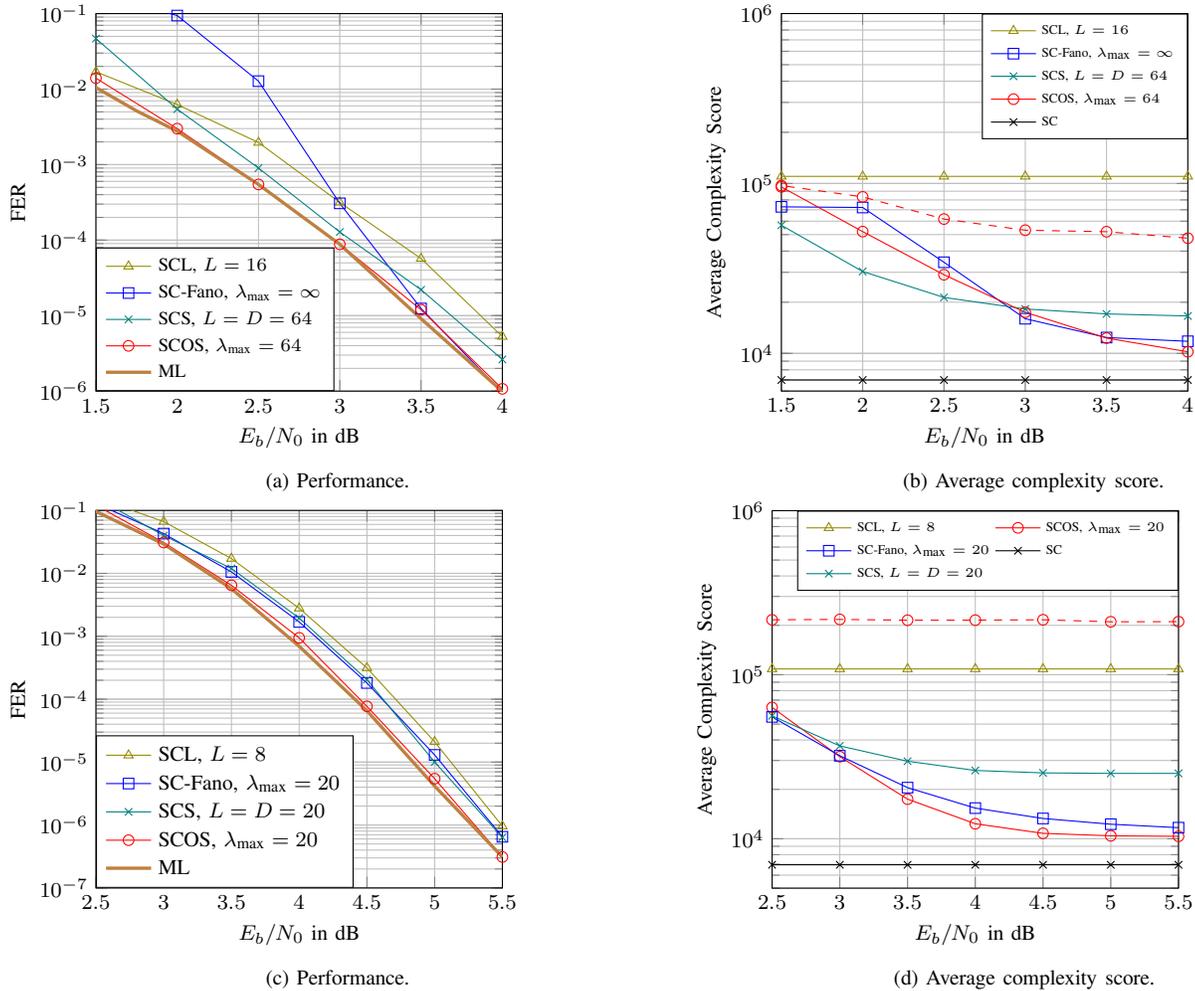
\begin{figure*}
\centering
\begin{subfigure}{0.49\textwidth}
    \begin{tikzpicture}[scale=1]
\footnotesize
\begin{semilogyaxis}[
legend cell align=left,
legend pos=south west,
legend cell align=left,
ymin=0.000001,
ymax=0.1,
width=3.5in,
height=2.1in,
grid=both,
xmin = 1.5,
xmax = 4,
xlabel = $E_b/N_0$ in dB,
ylabel = FER,
xtick={1,1.5,2,2.5,3,3.5,4,4.5,5,5.5}
]

\addplot[olive, mark = triangle]
table[x=snr,y=fer]{snr fer
1.00      0.050000  
1.50      0.017000  
2.00      0.006250   
2.50      0.001962   
3.00      0.000316   
3.50      0.000057  
4.00      0.00000528
};\addlegendentry{\tiny{\ac{SCL}, $L=16$}}

\addplot[blue, mark = square]
table[x=snr,y=fer]{snr fer
1.00      0.292000  
1.50      0.184000  
2.00      0.094250   
2.50      0.012750  
3.00      0.000307   
3.50      0.00001242
4.00      9.503444998812069e-7
};\addlegendentry{\tiny{SC-Fano, $\lambda_\text{max}=\infty$}}

\addplot[teal, mark = x]
table[x=snr,y=fer]{snr fer
1.00      0.195695    
1.50      0.046664     
2.00      0.005432     
2.50      0.000901     
3.00      0.000128
3.50      0.000022
4.00      2.62896859E-6
};\addlegendentry{\tiny{\ac{SCS}, $L=D=64$}}

\addplot[red, mark = o]
table[x=snr,y=fer]{snr fer
1.00      0.041425    
1.50      0.013959     
2.00      0.002993     
2.50      0.000546     
3.00      0.000088    
3.50      1.19714586e-5
4.00      1.07266859e-6
};\addlegendentry{\tiny{\ac{SCOS}, $\lambda_\text{max}\eddd{=\eta}=64$}}

\addplot[brown, line width = 1.25pt]
table[x=snr,y=fer]{snr fer
1.00      0.027750    
1.25      0.016375    
1.50      0.010500    
1.75      0.005150    
2.00      0.002775    
2.50      0.000546     
3.00      0.000088  
3.50      9.10415895e-6
3.75      3.006253e-06
4.00      9.868551e-07
};\addlegendentry{\tiny{\ac{ML}}} 

\end{semilogyaxis}
\end{tikzpicture}
    \vspace{-2mm}
    \caption{Performance.}
\end{subfigure}
\hfill
\begin{subfigure}{0.49\textwidth}
    \begin{tikzpicture}[scale=1]
\footnotesize
\begin{semilogyaxis}[
legend style={at={(0.98,0.875)},anchor=north east},
legend columns = 2,
legend cell align=left,
ymin=6e3, ymax=1e6,
width=3.5in,
height=2.1in,
grid=both,
xmin = 1.5,
xmax = 4,
xlabel = $E_b/N_0$ in dB,
ylabel = Average Complexity Score,
xtick={1,1.5,2,2.5,3,3.5,4,4.5,5,5.5}
]

\addplot[olive, mark = triangle]
table[x=snr,y=fer]{snr fer
1.00      110091.0
1.50      110091.0
2.00      110091.0
2.50      110091.0
3.00      110091.0
3.50      110091.0
4.00      110091.0
4.50      110091.0
};\addlegendentry{\tiny{SCL}, $L=16$}

\addplot[blue, mark = square]
table[x=snr,y=fer]{snr fer
1.00     56947.825500
1.50     72854.515000
2.00     72210.971750
2.50     34366.900375
3.00     15992.678232
3.50     12394.543332
4.00     11774.796544
};\addlegendentry{\tiny{\ac{SC-Fano}, $\lambda_\text{max}=\infty$}}

\addplot[red, mark = o]
table[x=snr,y=fer]{snr fer
1.50     84594.907270 
2.00     47706.454102
2.50     27240.606117
3.00     16289.673653
3.50     11524.768718 
4.00     9598.631067
};\addlegendentry{\tiny{\ac{SCOS}, $\lambda_\text{max}\eddd{=\eta}=64$}} 

\addplot[black, mark = x]
table[x=snr,y=fer]{snr fer
	1 6969
	1.5 6969
	2 6969
	2.5 6969
	3 6969
	3.5 6969
	4.0 6969
	4.5 6969
};\addlegendentry{\tiny{\ac{SC}}}

\addplot[teal, mark = x]
table[x=snr,y=fer]{snr fer
1.50     23824.730745
2.00     15645.124738
2.50     12778.733160
3.00     11641.315235
3.50     11136.498623
4.00     10889.186038
};\addlegendentry{\tiny{\ac{SCS}, $L=D=64$}}

\addplot[red, mark = o, dashed, mark options = solid]
table[x=snr,y=fer]{snr fer
	1.5 589151.686387
	2 582978.900215
	2.5 582558.197977
	3 582926.133546
    3.5 582280.613331
    4.0 577323.172826
};

\end{semilogyaxis}
\end{tikzpicture}
    \vspace{-2mm}
    \caption{Average complexity score.}
\end{subfigure}
\begin{subfigure}{0.49\textwidth}
    \begin{tikzpicture}[scale=1]
\footnotesize
\begin{semilogyaxis}[
legend pos = south west,
legend cell align=left,
ymin=0.0000001,
ymax=0.1,
width=3.5in,
height=2.5in,
grid=both,
xmin = 2.5,
xmax = 5.5,
xlabel = $E_b/N_0$ in dB,
ylabel = FER,
xtick={1,1.5,2,2.5,3,3.5,4,4.5,5,5.5}
]

\addplot[cyan, mark = *]
table[x=snr,y=fer]{snr fer
2.50      0.322581
3.00      0.131406     
3.50      0.047059    
4.00      0.009549    
4.50  	  0.001632
5.00  	  0.000150
5.50  	  0.000016
};\addlegendentry{\tiny{ORBGRAND}, TEP~$\leq5\cdot 10^6$}

\addplot[olive, mark = triangle]
table[x=snr,y=fer]{snr fer
2.50      0.160000
3.00      0.066500     
3.50      0.017167 
4.00      0.002789    
4.50  	  0.000314
5.00      0.000021
5.50      0.00000096
};\addlegendentry{\tiny{\ac{SCL}, $L=8$}}

\addplot[blue, mark = square]
table[x=snr,y=fer]{snr fer
2.50    0.129000
3.00 	0.042833  
3.50 	0.010550  
4.00 	0.001692  
4.50 	0.000181
5.00    0.000013
5.50    0.00000065
};\addlegendentry{\tiny{\ac{SC-Fano}, $\lambda_\text{max}=20$}}

\addplot[teal, mark = x]
table[x=snr,y=fer]{snr fer
2.50      0.159236
3.00      0.039920     
3.50      0.011973 
4.00      0.001950    
4.50  	  0.000205
5.00      0.000010
5.50      6.35678313E-7
};\addlegendentry{\tiny{\ac{SCS}, $L=D=20$}}

\addplot[red, mark = o]
table[x=snr,y=fer]{snr fer
2.50      0.120773
3.00      0.030950     
3.50      0.006448    
4.00      0.000944    
4.50  	  0.000077
5.00  	  5.45e-06
5.50  	  3.104857e-07
};\addlegendentry{\tiny{\ac{SCOS}, $\lambda_\text{max}\eddd{=\eta}=20$}}

\addplot[brown, line width = 1.25pt]
table[x=snr,y=fer]{snr fer
2.25      0.150250     
2.50      0.096250     
3.00      0.029250     
3.50      0.005700    
4.00      0.000694    
4.50  	  6.561680e-05
5.00  	  4.162504e-06
5.50  	  3.104857e-07
};\addlegendentry{\tiny{\ac{ML}}}

\end{semilogyaxis}
\end{tikzpicture}
    \vspace{-6mm}
    \caption{Performance.}
\end{subfigure}
\hfill
\begin{subfigure}{0.49\textwidth}
    \begin{tikzpicture}[scale=1]
\footnotesize
\begin{semilogyaxis}[
legend cell align=left,
legend columns = 2,
ymin=5e3, ymax=1e6,
width=3.5in,
height=2.5in,
grid=both,
xmin = 2.5,
xmax = 5.5,
xlabel = $E_b/N_0$ in dB,
ylabel = Average Complexity Score,
xtick={1,1.5,2,2.5,3,3.5,4,4.5,5,5.5}
]

\addplot[olive, mark = triangle]
table[x=snr,y=fer]{snr fer
2.50     108435
3.00     108435
3.50     108435
4.00     108435
4.50     108435
5.00     108435
5.50     108435
};\addlegendentry{\tiny{SCL, $L=8$}}

\addplot[blue, mark = square]
table[x=snr,y=fer]{snr fer
2.50     55334.778500
3.00     32064.443000
3.50     20496.849500
4.00     15379.700650
4.50     13296.282026
5.00     12266.276545
5.50     11684.499900
};\addlegendentry{\tiny{\ac{SC-Fano}, $\lambda_\text{max}=20$}}

\addplot[red, mark = o]
table[x=snr,y=fer]{snr fer
2.50     45975.294749
3.00     25731.978537
3.50     15262.849755
4.00     11187.683122
4.50     9950.460337
5.00     9642.990532
5.50     9576.958774
};\addlegendentry{\tiny{\ac{SCOS}, $\lambda_\text{max}\eddd{=\eta}=20$}}

\addplot[black, mark = x]
table[x=snr,y=fer]{snr fer
2.50     6942
3.00     6942
3.50     6942
4.00     6942
4.50     6942
5.00     6942
5.50     6942
};\addlegendentry{\tiny{\ac{SC}}}

\addplot[teal, mark = x]
table[x=snr,y=fer]{snr fer
2.50     28500.086538
3.00     19970.064587
3.50     15955.121501
4.00     14132.948000
4.50     13612.019957
5.00     13504.327791
5.50     13480.195601
};\addlegendentry{\tiny{\ac{SCS}, $L=D=20$}}

\addplot[red, mark = o, dashed, mark options = solid]
table[x=snr,y=fer]{snr fer
2.50     207226.111259
3.00     202368.539774
3.50     205677.160163
4.00     203259.981340
4.50     201611.807720
5.00     204178.787029
5.50     204178.787029
};

\end{semilogyaxis}
\end{tikzpicture}
    \vspace{-2mm}
    \caption{Average complexity score.}
\end{subfigure}
\caption{\edd{FER and average complexity score vs. $E_b/N_0$ over the \ac{biAWGN} channel for $(128,29)$ (a-b) and $(128,99)$ (c-d) \ac{dRM} codes under \ac{SCOS} compared to \ac{SC}-Fano, \ac{SCL} and \ac{SCS} decoding algorithms under various complexity constraints. As reference, the worst-case complexity scores obtained via simulations for each SNR value are also provided for SCOS as dashed curves with solid marks.}}
\label{fig:compscos_dfRMcodes}
\end{figure*}
This performance is achieved with the lowest complexity at high SNR. Its performance is followed by \ac{SCS} decoding with stack size $D=64$ and maximum node-visits $L=64$ in a wide SNR regime robustly with $0.25$ dB difference. It provides lower complexity compared to \ac{SCOS} and \ac{SC-Fano} decoding algorithms at SNR values smaller than \edddd{$3.5$} dB; however, it requires more complexity at high\edddd{er} SNR values with worse performance. \edddd{Although} its space complexity is \edddd{already} larger than \eddd{other competitors for the chosen parameters}\edddd{, the performance of \ac{SCS} decoding can be further improved if the stack size is chosen to be at least a few times of $L$, which is, e.g., chosen to be $D=LN$ in \cite[Sec. IV]{trifonov2018score}. However, this comes at the expense of also potentially higher computational especially at low and mid \ac{SNR} ranges.} \ac{SC-Fano} decoding (even with unbounded complexity) does not perform well in particular at relatively low and medium \acp{FER}, which might require a careful optimization of the search parameter $\Delta$. Nevertheless, it is very competitive to \ac{SCOS} decoding at low \acp{FER} in performance and average complexity score. In addition, it does not require large space complexity as it is the case for \ac{SCS} decoding. Also for the $(128,99)$ code, \ac{SCOS} decoding with $\lambda_{\mathrm{max}}\eddd{\,=\eta}=20$ approaches \ac{ML} decoding tightly (see Figure \ref{fig:compscos_dfRMcodes}(c)). Figure \ref{fig:compscos_dfRMcodes}(d) shows that its average complexity score is lower than all the competitors for \ac{SNR} values larger than \edddd{$3.5$} dB, i.e., except for high \acp{FER}. Its complexity score is within $\edddd{1.4}$ times of SC decoding at high SNR values in both cases. \edddd{Although the complexity of SCOS decoding, at least in the proxy of \ac{ANV}, is reaching to that of SC decoding at high SNR values,it requires extra operations similar to other improved decoding algorithms, e.g., in \texttt{lines 12-14} of Algorithm \ref{alg:SCDec} for the computation of scores and \acp{PM} independent of operating \ac{SNR}, which hinders its complexity score to hit that of SC decoding at high \ac{SNR} values. In the case of SCS decoding, this is partially due to the push and pop operations as well as for the calculation of the scores for the pushed paths. To complement the numerical results of Figure \ref{fig:compscos_dfRMcodes}, we provide Table \ref{tab:dRM_scs} for SCS decoding, where the parameters of the decoder is kept the same as in the figure.} As expected, \ac{SCL} decoding requires much larger complexity compared to all \edddd{SC-based} complexity-adaptive decoding algorithms. In addition, we also provided the worst-case complexity scores obtained during simulations at each \ac{SNR} value for the proposed \ac{SCOS} algorithm, \edddd{which are a few times more than that of SCL decoding due to the higher $\lambda_{\text{max}}$ of the latter}. \edddd{Note that the performance of the $(128, 99)$ code is also provided under \ac{ORBGRAND}, which is an efficient approximation of soft GRAND\cite{SDM20}, where the maximum number of \acp{TEP} is set to $5\times10^6$\eddddd{. Observe that it} performs within $0.7$ dB from the \ac{ML} decoding performance \eddddd{for the considered case}. The average complexity score\eddddd{\footnote{\eddddd{The codeword membership test costs a binary vector-matrix multiplication using the parity-check matrix of the underlying dRM code similar to \cite{YDG+22}. This way of membership test is not mandatory for GRAND-based decoders and more efficient methods might be used depending on the underlying code, e.g., applying a polar transform is sufficient for polar and RM codes.}}} reaches roughly to $5\times10^6$ at an SNR of $E_b/N_0=5.5$ dB, which we skip in Figure \ref{fig:compscos_dfRMcodes}(d) as it is an order of magnitude larger than that of SCL decoding with $L=20$.}} \eddddd{As Remark 2 hints, SCOS decoding provides advantage over universal decoders like ORBGRAND if the underlying code is somewhat suited for SC-based decoders.}
\begin{table}
	\caption{\edddd{dRM codes under SCS decoding}}
\vspace{-3mm}	
 \begin{center}
	\begin{tabular}{ccccccc}
		\hline\hline
		$(N,K)$ & $E_b/N_0$ & Addition & Compare & XOR & \eqref{eq:complexityscore} & SC \eqref{eq:complexityscore} \\
		\hline
		$(128,29)$ & $4.00$ & $771$ & $613$ & $1040$ & $10889$ & $6956$ \\
		$(128,99)$ & $5.50$ & $902$ & $852$ & $1149$ & $13480$ & $6945$ \\
		\hline\hline
	\end{tabular}
\end{center}
	\label{tab:dRM_scs}
\end{table}
\renewcommand{\arraystretch}{1}

\edd{Finally,} Figure~\ref{fig:uFER} illustrates that the $(128,64)$ \ac{PAC} code under modified \ac{SCOS} decoding gains in overall \ac{FER} and in \ac{uFER} as compared to a $(128,71)$ polar code concatenated with a \ac{CRC}-$7$ (resulting in a $(128,64)$ overall code) under \ac{SCL} decoding with $L=16$ at high \ac{SNR}. Furthermore, the code outperforms \ac{DSCF} decoding with the maximum number $T_\text{max}=70$ of bit flips.
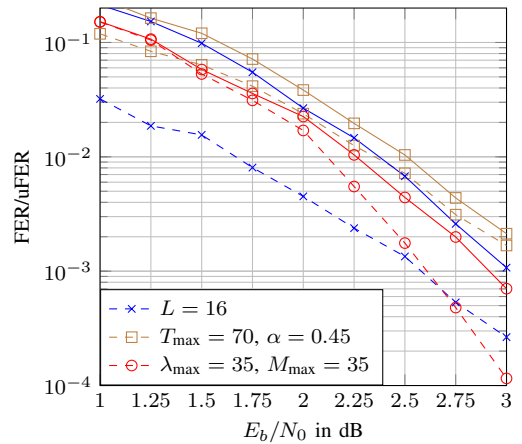
\begin{figure}
	\centering
	\footnotesize
	\begin{tikzpicture}[scale=1]
\begin{semilogyaxis}[
legend columns=1,
legend pos=south west,
ymin=0.0001,
ymax=0.2,
width=3.5in,
height=2in,
grid=both,
legend cell align=left,
xmin = 1.0,
xmax = 3.0,
xlabel = $E_b/N_0$ in dB,
ylabel = FER/uFER,
xtick=data
]

\addplot[blue, dashed, mark = x, mark options={solid}]
table[x=snr,y=fer]{snr fer
1 0.032103
1.25 0.018657
1.5 0.01555
1.75 0.0080541
2 0.0045078
2.25 0.002378
2.5 0.0013448
2.75 0.00053338
3 0.00026507
};\addlegendentry{$L=16$}

\addplot[brown, dashed, mark = square, mark options={solid}]
table[x=snr,y=fer]{snr fer
1 0.119
1.25 0.08375
1.5 0.063667
1.75 0.0415
2 0.023857
2.25 0.012538
2.5 0.00718
2.75 0.0031207
3 0.001678
};\addlegendentry{$T_\text{max}=70$, $\alpha=0.45$}

\addplot[red, dashed, mark = o, mark options={solid}]
table[x=snr,y=fer]{snr fer
1 0.151
1.25 0.105
1.5 0.053
1.75 0.031167
2 0.017
2.25 0.0055
2.5 0.0017609
2.75 0.00048039
3 0.00011538
};\addlegendentry{$\lambda_\text{max}\eddd{\,=\eta}=35$, $M_\text{max}=35$}

\addplot[red, mark = o]
table[x=snr,y=fer]{snr fer
1 0.152
1.25 0.107
1.5 0.058
1.75 0.035833
2 0.0225
2.25 0.0104
2.5 0.004413
2.75 0.0019902
3 0.0007028
};

\addplot[blue, mark = x]
table[x=snr,y=fer]{snr fer
1 0.21284
1.25 0.15261
1.5 0.098429
1.75 0.054929
2 0.026641
2.25 0.014649
2.5 0.0067643
2.75 0.0025922
3 0.0010735
};

\addplot[brown, mark = square]
table[x=snr,y=fer]{snr fer
1 0.2425
1.25 0.16325
1.5 0.12033
1.75 0.071625
2 0.038357
2.25 0.019615
2.5 0.01038
2.75 0.0043707
3 0.0021271
};

\end{semilogyaxis}

\end{tikzpicture}
	\caption{FER \edd{(solid)}/uFER \edd{(dashed)} vs. $E_b/N_0$ over the \ac{biAWGN} channel for \edd{the} $\left(128,64\right)$ \ac{PAC} code under modified \ac{SCOS} decoding with a maximum \acp{PM} and a fixed maximum complexity constraint compared to a $(128,64)$ modified polar code with an outer \ac{CRC}-$7$ having the generator polynomial $g(x)=x^{7}+x^{6}+x^5+x^2+1$\cite{Yuan19}.}
	\label{fig:uFER}
\end{figure}
\section{Conclusions}
\label{sec:conclusions}

The \ac{SCOS} algorithm was proposed that implements \ac{ML} decoding. The complexity adapts to the channel quality and approaches the complexity of SC decoding for the illustrated short- to moderate-length $\mathbf{G}_N$-coset codes with \ac{RM} rate profiles at high \ac{SNR} values. Unlike existing alternatives, the algorithm does not need an outer code or a separate parameter optimization. In addition, it provides better performance compared to SC-Fano and SCS decoding algorithms with a lower complexity at high SNR values.

A modification to \ac{SCOS} was proposed, which provides further gains compared to the original algorithm when there is stringent maximum complexity constraints. The modification has a potential to provide a trade-off between the overall and undetected error probabilities as a byproduct. Using the modified \ac{SCOS}, the $(128,64)$ \ac{PAC} code provides simultaneous gains in the overall and undetected frame error rates compared to \ac{CRC}-concatenated polar codes under \ac{SCL} decoding.

\appendix
\edd{Here, we provide the standard routines recursivelyCalcL and recursivelyCalcC required for SCOS decoding as Algorithms \ref{alg:recursivelyCalcL} and \ref{alg:recursivelyCalcC} similar to \cite[Alg. 3]{tal2015list} and \cite[Alg. 4]{tal2015list}, respectively. Note that Algorithm \ref{alg:recursivelyCalcC} denotes the check and variable node operations as $f^-$ (\texttt{line 8}) and $f^+$ (\texttt{line 10}), where the former may be implemented using min-sum approximation.}
\begin{algorithm}[t]
        \footnotesize
	\SetNoFillComment
	\DontPrintSemicolon
	\SetKwInOut{Input}{Input}\SetKwInOut{Output}{Output}
	\Input{layer $\lambda$ and phase $\phi$}
	\BlankLine
	\lIf{$\lambda=1$}{
		\Return{}
	}
	$\psi = \left \lfloor{\phi/2}\right \rfloor, t = 2^{\lambda-2}$\\
	\If{$\phi \mod 2 = 0$}{
		$\text{recursivelyCalcL}\left(\lambda-1,\psi\right)$
	}
	\For{$\beta=0,1,\dots,2^{\log_2N-\lambda+1}-1$}{
		\eIf{$\phi \mod 2 = 0$}{
			$\texttt{L}\left[\lambda,\phi+2\beta t+1\right]=$\\\nonl~$f^-\left( \texttt{L}\left[\lambda-1,\psi+2\beta t+1 \right],\texttt{L}\left[\lambda-1,\psi+ (2\beta+1) t+1 \right] \right)$
		}{
			$\texttt{L}\left[\lambda,\phi+2\beta t+1\right]=$\\\nonl~$f^+( \texttt{L}\left[\lambda-1,\psi+2\beta t+1 \right],\texttt{L}\left[\lambda-1,\psi+ (2\beta+1) t+1 \right],$\quad$\texttt{C}\left[\lambda,\phi+2\beta t\right])$
		}
	}
\caption{$\text{recursivelyCalcL}\left(\lambda,\phi\right)$}
	\label{alg:recursivelyCalcL}
\end{algorithm}
\begin{algorithm}[t]
        \footnotesize
	\SetNoFillComment
	\DontPrintSemicolon
	\SetKwInOut{Input}{Input}\SetKwInOut{Output}{Output}
	\Input{layer $\lambda$ and phase $\phi$}
	\BlankLine
	$\psi = \left \lfloor{\phi/2}\right \rfloor, t = 2^{\lambda-2}$\\
	\For{$\beta=0,1,\dots,2^{\log_2N-\lambda+1}-1$}{
		$\texttt{C}\left[\lambda-1,\psi+2\beta t + 1\right] = \texttt{C}\left[\lambda, \phi+2\beta t\right]\oplus \texttt{C}\left[\lambda, \phi+2\beta t+1\right]$\\
		$\texttt{C}\left[\lambda-1,\psi+(2\beta+1) t + 1\right] = \texttt{C}\left[\lambda, \phi+2\beta t+1\right]$
	}
	\If{$\psi \mod 2 = 1$}{
		$\text{recursivelyCalcC}\left(\lambda-1,\psi\right)$
	}
	\caption{\edd{$\text{recursivelyCalcC}\left(\lambda,\phi\right)$}}
	\label{alg:recursivelyCalcC}
\end{algorithm}

\section*{Acknowledgements}
The authors would like to thank Gerhard Kramer (TUM) for the comments on an early version of this manuscript. They also thank the Associate Editor and the anonymous reviewers for their valuable comments, which improved the presentation and the content of the work significantly. In particular, the authors thank the reviewer who directed us to use heap structure in the implementations to reduce the complexity.


\end{document}